\documentclass[12pt]{article}

\usepackage{a4wide}
\usepackage{amsmath}
\usepackage{amssymb}
\usepackage{amsthm}
\usepackage{subfigure}
\usepackage{url}
\usepackage{graphicx}
\usepackage{listings}

\newtheorem{theorem}{Theorem}
\newtheorem{lemma}{Lemma}
\newtheorem{definition}{Definition}
\newtheorem{corollary}{Corollary}

\newcommand{\comment}[1]{}

\newcommand{\scaff}[1]{#1^{\text{scaffold}}}
\renewcommand{\kill}[1]{#1^{\text{kill}}}
\newcommand{\lift}{\text{lift}}

%
%

%
%

%
%

\newcommand{\eps}{\varepsilon}

\newcommand{\co}{\colon\thinspace}



\newcommand{\R}{\mathbb{R}}

\newcommand{\Rd}{\mathbb{R}^d}
\newcommand{\N}{\mathbb{N}}

\newcommand{\order}{O}


\newcommand{\calR}{\mathcal R}
\newcommand{\calB}{\mathcal B}
\newcommand{\calH}{\mathcal H}

\newcommand{\pr}{\operatorname{pr}}



\sloppy 


\title{Hardness of discrepancy computation and epsilon-net
  verification in high dimension}

\author{Panos Giannopoulos\thanks{Institut f\"ur Informatik, Freie Universit\"at Berlin, Takustrasse 9, D-14195 Berlin, Germany, \ \  
\texttt{panos@inf.fu-berlin.de}}
\and
Christian Knauer\thanks{Institut f\"ur Informatik,  Universit\"at Bayreuth, Universit{\"a}tsstrasse 30, 
95447 Bayreuth, Germany, \ \ \texttt{christian.knauer@uni-bayreuth.de}}{\ }\footnote{This research was supported by the German Science Foundation (DFG) under grant Kn~591/3-1.}
\and
Magnus Wahlstr{\"o}m\thanks{Max-Planck-Institut f{\"u}r Informatik, Stuhlsatzenhausweg 85,  66123 Saarbr{\"u}cken, Germany, \ \ \texttt{wahl@mpi-inf.mpg.de}}{\ }\footnote{This research was supported by the German Research Foundation (DFG) via its priority program "SPP 1307: Algorithm Engineering'', grant DO 749/4-1.}
\and
Daniel Werner\footnote{Institut f\"ur Informatik, Freie Universit\"at Berlin, Takustrasse 9, D-14195 Berlin, Germany, \ \ \texttt{daniel.werner@fu-berlin.de}}{\ }\footnote{This research was funded by Deutsche Forschungsgemeinschaft within the Research Training Group (Graduiertenkolleg) "Methods for Discrete Structures".}
}

\date{\today}                                      

\begin{document}

\maketitle

\begin{abstract}
Discrepancy measures how uniformly distributed a point set is with respect to a given set of ranges. There are two notions of discrepancy, namely \emph{continuous discrepancy} and \emph{combinatorial discrepancy}. Depending on the ranges, several possible variants arise, for example star discrepancy, box discrepancy, and discrepancy of half-spaces. In this paper, we investigate the hardness of these problems with respect to the dimension $d$ of the underlying space.

All these problems are solvable in time~$n^{\order(d)}$, but such a
time dependency quickly becomes intractable for high-dimensional data.
Thus it is interesting to ask whether the dependency on~$d$ can be
moderated.  

We answer this question negatively by proving that the canonical decision problems are W[1]-hard with respect to the dimension. This is done via a parameterized reduction from the \textsc{Clique} problem. As the parameter stays linear in the input parameter, the results moreover imply that these problems require $n^{\Omega(d)}$ time, unless \textsc{3-Sat} can be solved in $2^{o(n)}$ time. Further, we derive that testing whether a given set is an $\eps$-net with respect to half-spaces takes $n^{\Omega(d)}$ time under the same assumption. As intermediate results, we discover the W[1]-hardness of other well known problems, such as determining the largest empty star inside the unit cube. For this, we show that it is even hard to approximate within a factor of $2^n$.\\ \medskip

\textit{Keywords:} discrepancy, epsilon-nets, geometric dimension, parameterized complexity,
inapproximability.

\end{abstract}

\section{Introduction}

Geometric discrepancy has significant applications in several areas, including optimization,
statistics, combinatorics, and computer graphics. See for example the textbooks by~\cite{Ma10},~\cite{Ch00}, and~\cite{DT97}. In particular, the \emph{star discrepancy} of a point set is important in
multi-variate numerical integration, 
where the error of a quasi-Monte Carlo integration is bounded as a 
function of the star discrepancy
of the point set used in the integration (by the Koksma--Hlawka 
inequality, see~\cite{niederreiterbook}). 
In addition, the difficulty of computing the star discrepancy can be 
an obstacle to evaluating different methods for creating 
low-discrepancy point sets, see, for example,~\cite{DBLP:journals/jc/DoerrGW10}.

Unfortunately, computing the star discrepancy of a point set using any 
known method
is computationally intensive; given a set~$P$ of~$n$ points in~$d$ 
dimensions,
every known method for getting even a constant-factor approximation of its star discrepancy 
has a running time of~$n^{\Theta(d)}$ (see below).
As many applications, for example in financial mathematics, require 
integration of functions
in tens or even hundreds of dimensions, this quickly becomes infeasible.

The main question we ask here is whether this dependency on~$d$ is
necessary.  Specifically, we ask whether the decision version for star discrepancy (and other related problems) can be
solved in ${O}(f(d)n^c)$ time, for some computable function $f$ and some
constant $c$ independent of $d$, i.e, whether it is fixed-parameter
tractable with respect to $d$. Note that $NP$-hardness
for a problem does not exclude such a possibility. Proving that a problem is
W[1]-hard with respect to $d$ implies that such an
algorithm is not possible, under standard complexity theoretic
assumptions. 

\subsection{Parameterized Complexity.}
We review some basic definitions of parameterized complexity theory;
see, for example, one of the textbooks by~\cite{DF99},~\cite{FG06}, and~\cite{Nie06} for an introduction.
A problem with input size $n$ and a positive integer parameter~$k$ is
\emph{fixed-parameter tractable} (fpt for short) if it can be solved by an
algorithm that runs in $\order(f(k)\cdot n^{c})$ time, where $f$ is a computable
function depending only on $k$, and $c$ is a constant independent of $k$;
such an algorithm is (informally) said to run in fpt-time.  The class of
all fixed-parameter tractable problems is denoted by FPT. 

An infinite hierarchy of classes, the W-hierarchy, has been introduced for establishing
fixed-parameter intractability.  Its first level, W[1], can be thought of
as the parameterized analog of NP: a parameterized problem that is hard for
W[1]\ is not in FPT\ unless FPT=W[1], which is considered highly
unlikely under standard complexity theoretic assumptions.  Hardness is
sought via an \emph{fpt-reduction}, i.\,e., an fpt-time many-one reduction
from a problem $\Pi$, parameterized with $k$, to a problem $\Pi'$,
parameterized with $k'$, such that $k'\leq g(k)$ for some computable
function $g$.

\subsection{Discrepancy and epsilon-nets}\label{Sec:Discrepancy}
In this section, we define the basic notion of discrepancy. Let $X$ be set and $\calR$ be a set of subsets of $X$, both not necessarily finite. A tuple $(X, \calR)$ is called a \emph{range space}.

If the range space arises from point sets and geometric objects, such as half-spaces or hyperrectangles (boxes), we call it a \emph{geometric range space}. Often, as in our case, the ranges are given implicitly. As an example, for a point set $P \subset \R^d$, we define 
\[ \calH_P := \left\{ H \cap P \mid H \text{ is a half-space} \right\}. \]
This is the range space induced by all half-spaces in $\R^d$. Observe that if $P$ is finite, even though there are infinitely many half-spaces, the size of $\calR$ is at most $2^{|P|}$.

We will now define two different notions of discrepancy, namely the \emph{continuous discrepancy} and the \emph{combinatorial discrepancy}.

\subsubsection{Continuous Discrepancy}
This concept relates the volume (i.e., its Lebesgue-measure) of a point set to its discrete measure, i.e., to the fraction of points it contains. To simplify matters, we restrict ourselves to point sets in the $d$-dimensional unit cube. 

The intuition is the following: a range space should have high discrepancy either if there is a range with small volume that contains a large fraction of points, or if there is a range with large volume that contains a small fraction of points. In that sense, it measures how good a finite set of points approximates the uniform distribution.

\begin{definition}\label{Definition:ContinuousDiscrepancy}
Let $P\subseteq \R^d$ be a set of points and $\calR$ be a set of subsets of $[0,1]^d$. We define the \emph{continuous discrepancy} of $P$ with respect to $\calR$ as
\[ \bar D_{\mathcal R}\left(P\right) := \max_{R \in \mathcal R} \left| vol(R) - \frac{|R \cap P|}{|P|} \right|. \]
\end{definition}
Figure \ref{fig:ContinuousDiscrepancy} shows two point sets in the plane with respect to axis-parallel boxes. The point set in Figure \ref{subfig:LowDiscrepancy} shows a point set with low discrepancy. In \ref{subfig:HighDiscrepancy1}, the discrepancy is attained for a box of high volume and few points inside. In \ref{subfig:HighDiscrepancy2}, it is attained for a small volume box with many points inside.
\begin{figure}
	\subfigure[low discrepancy set]{\includegraphics[width=0.32\textwidth]{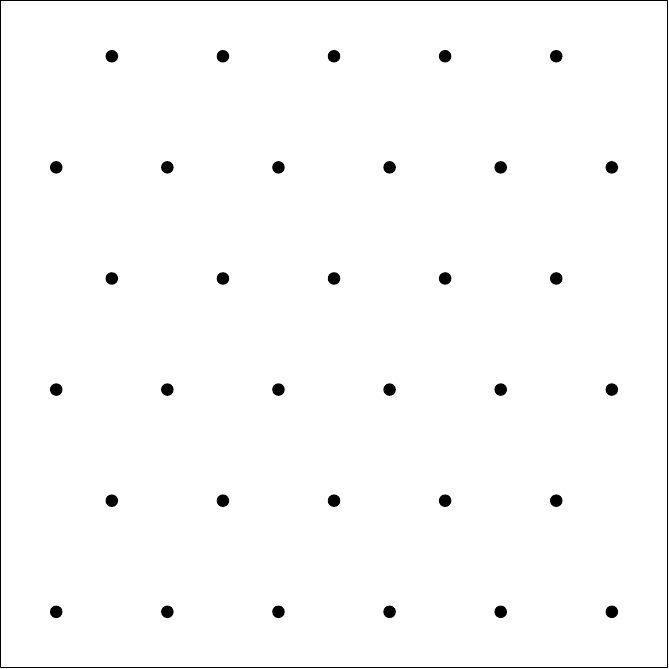}\label{subfig:LowDiscrepancy}}\hfill
	\subfigure[large box, few points]{\includegraphics[width=0.32\textwidth]{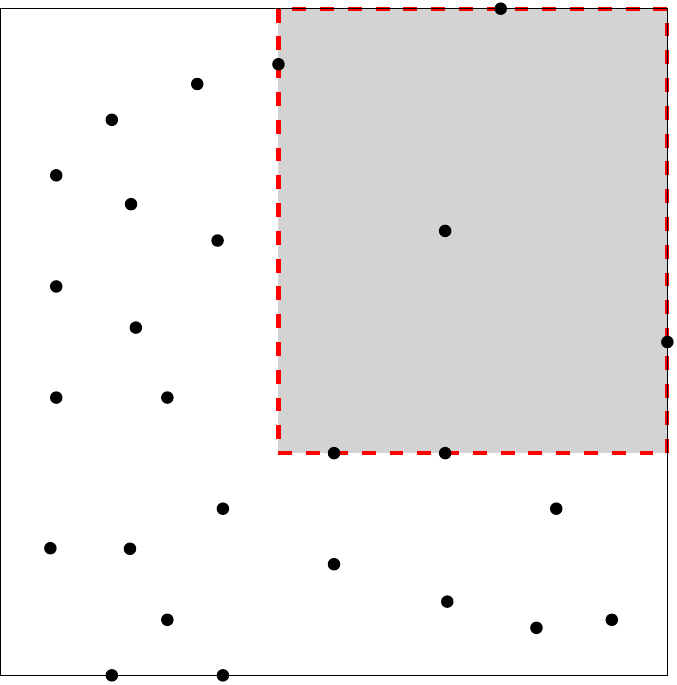}\label{subfig:HighDiscrepancy1}}\hfill
	\subfigure[small box, many points]{\includegraphics[width=0.32\textwidth]{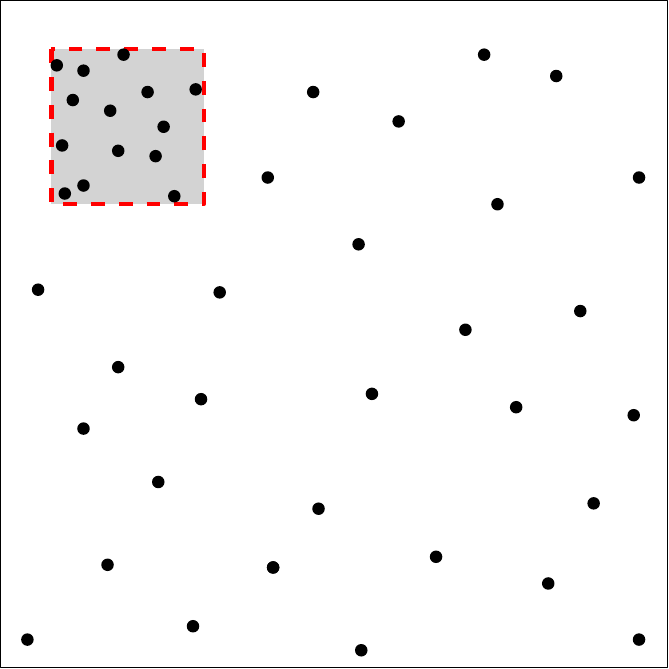}\label{subfig:HighDiscrepancy2}}
	\caption{Continuous Discrepancy}\label{fig:ContinuousDiscrepancy}
\end{figure}

\subsubsection{Combinatorial Discrepancy}
The \emph{combinatorial discrepancy}, sometimes called red-blue-discrepancy, is a slightly different notion. Here, we are given a set of points $P$, colored red or blue, and a set of ranges. Such a set is said to have high discrepancy, if there is a range where the difference between red and blue points is high.
\begin{definition}
Let $P = P_r \uplus P_b$ be a set of points in $\R^d$, and let $\calR$ be a set of subsets of $P$. We define the \emph{combinatorial discrepancy} of $P$ with respect to $\calR$ as
\[ \dot D_{\mathcal R}\left(P_r, P_b\right) := \max_{R \in \mathcal R} \left| | R \cap P_r | - | R \cap P_b | \right|. \]
\end{definition}

\subsection{Epsilon-nets}
A theory closely related to discrepancy is that of $\eps$-nets. We will give some basic terminology and afterwards discuss the relation of our results to $\eps$-net problems.

For a range space $(P, \calR)$, a set $S \subseteq P$ is called an $\eps$-net if for all $R \in \calR$ with $|R| \geq \eps |P|$ it holds that $R \cap S \neq \emptyset$. This means that $S$ intersects all large sets in $\calR$, namely those that contain at least an $\eps$-fraction of the points.

For applications, one is of course interested in nets that are small. For general range spaces, one can not expect such a behavior: If $\calR$ is the power-set of $P$, any $\eps$-net must be of size at least $n - \eps n + 1$.

Surprisingly, there are many range spaces where the size of a net does not depend on the value of $n$. 
The \emph{Vapnik-Chervonenkis dimension} has proved as a useful tool for studying these range spaces. Let $(P, \calR)$ be a range space. For a set $Q \subset P$, let $\pr_Q(P) := \left\{ R \cap Q \mid R \in \calR \right\}$ denote the projection onto $Q$. We say that a set $Q$ is \emph{shattered} by $\calR$, if $\pr_Q(P) = \mathcal P(Q)$, i.e., if all subsets of $Q$ appear in the projection of $P$ onto $Q$. Now the VC-dimension $\delta$ of a range space is the size of the largest set $Q \subset P$ that is shattered by $\calR$.

\cite{HW86} proved, based on a work by \cite{VC71}, that range spaces of finite VC-dimension $\delta$ admit $\eps$-nets of size $\order\left( \frac{\delta}{\eps} \log \frac{\delta}{\eps} \right)$. Their result even proves something a lot stronger, namely that a random sample of that size will be an $\eps$-net with high probability.

The theory of $\eps$-nets has been a hot topic recently, most notably because of the results by \cite{Al11} and \cite{PT10}.

\subsection{Our results}
We study the decision versions of several problems related to the above definitions.
In Section \ref{Sec:BichromaticRectangle}, we consider the following problem.
\begin{definition} ($d$-\textsc{Red-Blue-Discrepancy}) Let $P = P_r \uplus P_b $ be set of points in $\R^d$, and $k \in \N$. Decide whether $\dot D_\calB(P_r, P_b) \geq k$, where $\calB$ is the set of all axis-parallel boxes inside the unit cube.
\end{definition}

In particular, we show the following.
\begin{theorem}\label{Thm:RedBlueMain} The problem $d$-\textsc{Red-Blue-Discrepancy} is W[1]-hard with respect to the dimension and NP-hard.
\end{theorem}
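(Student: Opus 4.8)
The plan is to prove W[1]-hardness via an fpt-reduction from \textsc{Clique}, which is the canonical W[1]-hard problem and the one the authors announce in the abstract. Given a graph $G = (V, E)$ and a parameter $k$, I would construct a point set $P = P_r \uplus P_b$ in $\R^d$ with $d = \order(k)$ (say $d = \binom{k}{2}$ or $d = k$, with one coordinate per vertex slot or per edge slot), together with a target value $k'$, such that $G$ has a $k$-clique if and only if $\dot D_\calB(P_r, P_b) \ge k'$. The key design idea is to encode the choice of $k$ vertices geometrically: an axis-parallel box witnessing large discrepancy should be forced to ``select'' one vertex per group/coordinate, and the red--blue balance inside the box should be large exactly when all $\binom{k}{2}$ selected pairs are edges of $G$. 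Concretely, I would place blue points as penalties that a box can only avoid by respecting the edge structure, while red points reward a box for containing a full selection; the box then acts as a combinatorial gadget whose axis-parallel faces independently choose, in each coordinate, which vertices to include.

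The main steps, in order, are as follows. First I would fix the dimension $d$ as a function of $k$ alone and lay out a ``selection'' gadget: partition the coordinates so that each coordinate corresponds to a vertex or to a pair of positions, and place points so that any high-discrepancy box must, in each coordinate, cut between a ``chosen'' and ``unchosen'' regime, thereby encoding a candidate set of $k$ vertices. Second, I would add red points that contribute positively to $|R \cap P_r| - |R \cap P_b|$ precisely when the box's selection is internally consistent (picks exactly one vertex per group and the same vertex wherever a vertex index is shared across coordinates). Third, and most importantly, I would add blue ``edge-checking'' points: for every \emph{non}-edge $\{u,v\} \notin E$, a blue point positioned so that it falls inside the selecting box whenever the box simultaneously selects both $u$ and $v$, thus penalizing the discrepancy and destroying the witness unless the selection is a clique. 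Setting $k'$ to the value achieved by a consistent clique-selection (and strictly larger than anything a non-clique or inconsistent box can reach) completes the equivalence. Finally I would verify that $|P|$ is polynomial in $|V|$, that $d = \order(k)$ so the parameter bound $k' \le g(k)$ of the fpt-reduction holds, and that the whole construction is computable in fpt-time; since \textsc{Clique} has no $f(k)n^{o(k)}$ algorithm under ETH and the reduction keeps $d$ linear in $k$, this simultaneously yields the $n^{\Omega(d)}$ lower bound. \np-hardness then follows by taking $k$ unbounded (i.e.\ viewing the same reduction without the parameter restriction), since \textsc{Clique} is \np-complete.

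The hard part will be the third step: engineering blue points that penalize exactly the non-edges and nothing else, while keeping the coordinate-wise ``selection'' semantics of axis-parallel boxes clean. The subtlety is that an axis-parallel box has $2d$ independent degrees of freedom (a lower and upper threshold per coordinate), and the adversary maximizing discrepancy will exploit any unintended box that gathers red points or dodges blue points without corresponding to a legal vertex selection. I therefore expect the bulk of the argument to be a careful case analysis showing that no ``cheating'' box can reach the threshold $k'$: one must argue that deviating from a clean one-vertex-per-coordinate selection either admits a blue penalty point or forfeits enough red points that the discrepancy drops below $k'$. Getting the point coordinates and the value $k'$ to make this gap argument tight — so that consistent clique-boxes clear the bar and every other box falls short — is where the real work lies; the surrounding bookkeeping (parameter bounds, polynomial size, the ETH corollary) should be routine once the gadget is correct.
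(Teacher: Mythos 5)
Your overall strategy---an fpt-reduction from \textsc{Clique} with $d=\order(k)$, selection gadgets per coordinate group, and penalty points for non-edges---is exactly the skeleton of the paper's argument, but two essential ingredients are missing, and one of them is a step that would actually fail as written. First, the selection gadget itself is never constructed; you explicitly defer it as ``where the real work lies.'' The paper's solution is concrete and worth knowing: it works in $\R^{2k}$ with $k$ pairwise orthogonal two-dimensional planes $\R_i^2$, places the $n$ blue ``vertex'' points of plane $i$ on the anti-diagonal $b_i(v)=(v,\,n+1-v)$ with red separator points interleaved between consecutive blue ones, so that any box containing the origin and no red scaffold point contains at most one blue point per plane. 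The non-edge penalty for $uv\notin E$ is the single red point $b_i(u)+b_j(v)$ in the product $\R_i^2\times\R_j^2$; the crucial (and easily verified) property is that this point lies in an origin-containing axis-parallel box if and only if both $b_i(u)$ and $b_j(v)$ do. This is what makes the ``no cheating box'' case analysis you worry about essentially trivial---there is no delicate gap argument to tune.

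Second, and this is the genuine gap: combinatorial discrepancy is $\bigl|\,|R\cap P_r|-|R\cap P_b|\,\bigr|$, an \emph{absolute value}, so it is symmetric in the two colors. In your scheme a box that simply swallows many points of the ``penalty'' color and none of the ``reward'' color also achieves large discrepancy, independently of whether $G$ has a clique, and nothing in your proposal rules this out. The paper breaks this symmetry by first proving hardness of \textsc{Bichromatic-Rectangle} (find a box with $\geq k$ blue and zero red points, where the absolute value plays no role) and then converting to \textsc{Red-Blue-Discrepancy} by placing $N=|P|$ blue copies at the origin, so that the maximizing box is forced to be blue-heavy, together with the observation that within each plane the blue--red imbalance is at most one. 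Without some such mechanism your threshold $k'$ cannot separate yes- from no-instances. The remaining bookkeeping in your writeup (parameter linear in $k$, polynomial size, ETH consequence, NP-hardness from the unparameterized reduction) is fine and matches the paper.
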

This result will be easily derived by first showing the W[1]-hardness of another well known problem, called \textsc{Bichromatic-Rectangle}, where we have to find a box that contains as many blue points as possible, but no red points.

Subsequently, we will investigate the case of continuous discrepancy. Thereto, let $\calB_0$ be the set of axis-parallel boxes inside the unit cube containing the origin. Such a box is called a \emph{star}, or \emph{subinterval}. In Section \ref{Sec:StarDiscrepancy} and Section \ref{Sec:BoxDiscrepancy}, we will consider the following problem.
\begin{definition} ($d$-\textsc{Star-Discrepancy}) Let $P$ be a set of points in $\R^d$ and $V$ be a rational number. Decide whether $\bar D_{\calB_0}(P) \geq V$.
\end{definition}
The problem where the range space is the set of \emph{all} axis-parallel boxes inside the unit cube $\calB$ is defined analogously and will be called $d$-\textsc{Box-Discrepancy}. In Sections \ref{Sec:StarDiscrepancy} and \ref{Sec:BoxDiscrepancy}, we show the following. (The NP-hardness of star discrepancy was shown by~\cite{GSW09}.)
\begin{theorem}\label{Thm:ContinuousDiscrepancyMain} The problems $d$-\textsc{Star-Discrepancy} and $d$-\textsc{Box-Discrepancy} are W[1]-hard with respect to the dimension and NP-hard. 
\end{theorem}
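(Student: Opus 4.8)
The plan is to reduce from Clique, routing through an intermediate Largest Empty Star problem and exploiting an exponential inapproximability gap for it. From a Clique instance $(G,k)$ with $n=|V(G)|$, I would build a point set $P\subseteq[0,1]^d$ with $d=\order(k)$ and $|P|$ polynomial in $n$, in which the interval chosen by a star $R\in\calB_0$ in each of $k$ coordinate-blocks encodes the choice of a clique vertex. Points are placed so that one lies inside $R$ exactly when two selected vertices form a non-edge; thus an empty star corresponds to a consistent clique candidate. I would calibrate the interval lengths so that a genuine $k$-clique produces an empty star of a target volume $V$, while every selection touching a non-edge is forced either to contain a blocking point or to collapse in volume by a factor of $2^{\Omega(n)}$. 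This simultaneously gives W[1]-hardness of Largest Empty Star and shows it is hard to approximate within $2^n$.

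To obtain Star-Discrepancy, observe that an empty star $R$ contributes $vol(R)-0=vol(R)$ to $\bar D_{\calB_0}(P)$, so in the clique (yes) case $\bar D_{\calB_0}(P)\ge V$. The delicate direction is the no case, where I must bound $\lvert vol(R)-|R\cap P|/|P|\rvert$ over the whole continuum of stars, not only empty ones: a star of volume just below $V$ holding a single point would otherwise spoil the separation. Here the exponential gap is the essential lever — it guarantees that any star of volume comparable to $V$ must contain enough points that $|R\cap P|/|P|$ cancels most of its volume, so every star stays strictly below $V$. Fixing the threshold $V$ between the two regimes completes the reduction.

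For Box-Discrepancy I would use $\calB_0\subseteq\calB$, whence $\bar D_{\calB_0}(P)\le\bar D_{\calB}(P)$ and the yes case transfers at once. The remaining task is to certify that enlarging the family to all axis-parallel boxes does not manufacture large discrepancy in the no case; I would secure this either by symmetrizing the construction (reflecting copies so that both endpoints of each selecting interval are pinned and a worst-case box may be assumed to touch the origin) or by a direct argument that a discrepancy-maximizing box can be taken with one corner at the origin. NP-hardness of both problems then follows by applying the identical construction to the unparameterized Clique problem, which runs in polynomial time and yields $d=\order(n)$; alternatively, for stars one may simply invoke the NP-hardness proved by \cite{GSW09}.

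The step I expect to be hardest is the no-case analysis for the continuous setting. Unlike the clean combinatorial statement about empty boxes, the continuous discrepancy is a maximum over infinitely many boxes with every possible intermediate point count, and I must uniformly rule out both the large-volume/few-points and the small-volume/many-points extremes. The $2^n$ inapproximability of the empty star is exactly what renders this uniform bound tractable, which is why establishing that gap — rather than mere W[1]-hardness of the empty star — is the technical heart of the argument.
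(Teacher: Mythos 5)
Your high-level skeleton (reduce from \textsc{Clique} to an empty-star problem, then argue the discrepancy maximizer is a large empty star) matches the paper, but your central technical claim is backwards, and this is a genuine gap. You assert that the $2^n$ inapproximability gap (i.e.\ taking the scale factor $\mu$ large, so that the target volume $V=C^k=\mu^{-k(n-1)}$ is exponentially small) is ``the essential lever'' for the no-case of \textsc{Star-Discrepancy}. In fact it is fatal to the reduction: with $V$ exponentially small, the ``small box, many points'' direction dominates in \emph{both} the yes and the no case. Concretely, a thin star that hugs one coordinate plane $\R_i^2$ captures $\Theta(n)$ of the $N=\order(k^2n^2)$ scaffold points while having negligible volume, contributing discrepancy roughly $n/N$, which dwarfs $V=2^{-\Omega(kn)}$; hence $\bar D_{\calB_0}(P)\ge V$ holds regardless of whether $G$ has a clique, and the threshold does not separate the two regimes. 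The paper does the exact opposite of what you propose: it chooses $\mu=1+\frac{1}{2knN}$, barely above $1$, so that $C^k>\frac{N-1}{N}$. Since for every star $B$ one trivially has $\frac{|B\cap P|}{|P|}-vol(B)\le\frac{N-1}{N}$, and any star containing at least one point has $vol(B)-\frac{|B\cap P|}{|P|}\le 1-\frac1N<C^k$, the only way to reach discrepancy $C^k$ is via an \emph{empty} star of volume $C^k$, which reduces cleanly to the empty-star lemma. So the $2^n$ gap is used only for the separate inapproximability theorem about \textsc{Maximum-Empty-Star}; for the discrepancy reductions the gap must be given up, not exploited.

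A secondary issue concerns \textsc{Box-Discrepancy}. Your fallback of ``a direct argument that a discrepancy-maximizing box can be taken with one corner at the origin'' fails for the construction as built: all constructed points have many zero coordinates, so the open box $(0,1)^{2k}$ is empty and has volume $1$, trivializing the unanchored problem. The paper's fix is a concrete \emph{lifting} step --- replace every zero coordinate by $1/2$ (pushing $P$ into $[1/2,1]^{2k}$) and add anchor points at the origin and at $(1,\dots,1)$ --- after which any box of volume at least $2/3$ must contain $(1/2,1/2)$ in every two-dimensional projection, restoring the per-plane independence. Your ``reflecting copies'' idea is not developed enough to tell whether it achieves the same effect; as written, the no-case for boxes remains unproven.
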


In order to prove these two theorems, we consider two related problems, which have also been studied in the past.
\begin{definition} ($d$-\textsc{Maximum-Empty-Star}) Let $P$ be a set of points in $\R^d$ and $V$ be a rational number. Decide whether there is a box of volume $V$ inside the unit cube that contains the origin but no points from $P$.
\end{definition}
Analogously, we define the problem $d$-\textsc{Maximum-Empty-Box}. We establish the following results.
\begin{theorem}\label{Thm:EmptyBoxMain} The problems $d$-\textsc{Maximum-Empty-Star} and $d$-\textsc{Maximum-Empty-Box} are W[1]-hard with respect to the dimension. 
\end{theorem}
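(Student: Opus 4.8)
The plan is to reduce from \textsc{Clique}, the canonical W[1]-hard problem, using a parameterized reduction where the dimension $d$ is a function only of the clique parameter $k$ (ideally $d = O(k)$ or $d = O(\binom{k}{2})$ — one dimension per vertex or per potential edge of the clique). Given a graph $G = (V,E)$ and an integer $k$, I would construct a point set $P$ in $\R^d$ together with a target volume $V$ such that there exists an empty star (resp.\ empty box) of volume exactly $V$ inside the unit cube if and only if $G$ contains a clique of size $k$. Since the reduction must run in fpt-time and keep $d$ bounded by a function of $k$ alone, the number of points $n = |P|$ will be polynomial in $|V| + |E|$, while the coordinates will be chosen so that the geometry encodes the adjacency structure of $G$.

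\medskip

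\noindent\textbf{Key steps.} First I would set up one ``vertex-selection'' gadget per dimension: in coordinate $i$ (for $i = 1, \dots, k$), the choice of where to cut the box along that axis should correspond to selecting a vertex $v_i \in V$ as the $i$-th member of the candidate clique. Concretely, for each vertex I would place points at prescribed coordinate values so that the box boundary along axis $i$ can only land in a ``slot'' corresponding to a genuine vertex; the product of the side lengths then forces an exact-volume constraint that is met precisely when each axis selects exactly one vertex. Second, I would install ``edge-checking'' points that are hit (i.e., fall inside the box) unless the pair of vertices $(v_i, v_j)$ chosen on axes $i$ and $j$ is adjacent in $G$. The standard trick here is to place, for every \emph{non-edge} $\{u,w\}$ of $G$, a point whose coordinates in dimensions $i$ and $j$ are positioned to lie inside the box exactly when axis $i$ selects $u$ and axis $j$ selects $w$ (and whose remaining coordinates are small enough — e.g.\ near the origin for the star version — that they never independently exclude the point). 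Thus an empty box of the target volume exists iff one can select $k$ vertices, one per axis, that are pairwise adjacent, i.e.\ a $k$-clique. Third, I would verify both directions of the equivalence and confirm that all coordinates are rationals of polynomially bounded bit-length so the instance is constructed in fpt-time. Finally, the same construction should serve for both \textsc{Maximum-Empty-Star} and \textsc{Maximum-Empty-Box}: the star version (box must contain the origin) is the more constrained case, so I would first design the gadget there and then argue that allowing the lower corner to move freely (the general box case) does not create spurious solutions, possibly by anchoring coordinates so that the optimal lower corner is forced to the origin anyway.

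\medskip

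\noindent\textbf{Main obstacle.} The hard part will be making the \emph{volume constraint} and the \emph{emptiness constraint} interact correctly and exactly. With a product-of-side-lengths volume, it is delicate to force each axis to select \emph{exactly one} vertex rather than, say, an intermediate cut that trades a large side in one dimension for a small side in another while still meeting the volume target; I expect to need carefully chosen coordinate values (perhaps powers or logarithmically-spaced values so that the volume is maximized only at ``clean'' vertex-aligned cuts) to rule out such cheating configurations. Equally delicate is ensuring the non-edge points genuinely block a box if and only if the corresponding vertices are selected, without those same points accidentally blocking legitimate clique configurations through their coordinates in the \emph{other} $d-2$ dimensions — this requires setting those irrelevant coordinates so the point is ``automatically inside'' along every axis except the two it is meant to test. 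Getting these two mechanisms to coexist, and proving that the target volume is achievable if and only if a clique exists, is where the technical weight of the argument lies.
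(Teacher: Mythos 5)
Your overall strategy coincides with the paper's: an fpt-reduction from $k$-\textsc{Clique} with vertex-selection gadgets tied to groups of coordinates, non-edge ``blocking'' points whose irrelevant coordinates sit at the origin so that a star automatically covers them, and a final argument that the box version can be reduced to the anchored case. However, the step you yourself flag as the main obstacle --- making every vertex choice on a given axis yield the \emph{same} side length so that the volume target $V$ is clique-independent --- is not just delicate, it is impossible in the form you propose. With one dimension per clique slot and a star anchored at the origin, the side length along axis $i$ \emph{is} the cut position, so $n$ distinct vertex slots necessarily give $n$ distinct side lengths, and no single target volume can certify ``one vertex selected per axis.'' The paper's resolution is to spend \emph{two} dimensions per slot ($d=2k$): in each plane $\R_i^2$ it places blocking points $p_i(u)=(C\mu^{u-1},1/\mu^{u})$ along the hyperbola $xy=C/\mu$, so that the maximal empty anchored rectangles have upper-right corners $c_i(u)=(C\mu^{u-1},1/\mu^{u-1})$, \emph{all of equal area} $C=1/\mu^{n-1}$, while any rectangle avoiding these slots has area at most $C/\mu$. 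This gap of a multiplicative factor $\mu$ per plane is what makes the volume threshold $V=C^k$ work (and, as a bonus, yields the $2^{|P|}$-inapproximability by taking $\mu$ huge). Your proposal contains the germ of this (``powers or logarithmically-spaced values'') but does not supply the construction, and the equivalence proof cannot be completed without it.

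For the unanchored \textsc{Maximum-Empty-Box} case there is a second concrete gap: once the box need not contain the origin, the non-edge points (which live in a $4$-dimensional coordinate subspace and are zero elsewhere) can simply be dodged by a box bounded away from the origin, so ``anchoring the optimal lower corner'' does not happen by itself --- indeed $(0,1)^{2k}$ is empty and has volume $1$. The paper repairs this with a lifting map sending every zero coordinate to $1/2$ and choosing $C^k=2/3$, so that any box of volume at least $2/3$ must project onto each plane with area at least $2/3$ and hence contain the point $(1/2,1/2)$ there; this restores the plane-by-plane decoupling and lets the star argument go through verbatim. You would need to add both of these devices to turn your outline into a proof.
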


For the $d$-\textsc{Maximum-Empty-Star} problem we immediately get a result that is a lot stronger:
\begin{theorem}\label{Thm:Inapproximability} The problem $d$-\textsc{Maximum-Empty-Star} cannot be approximated in fpt time within a factor of $2^{|P|}$, unless $FPT = W[1]$.
\end{theorem}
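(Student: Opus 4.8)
The plan is to obtain the inapproximability essentially for free from the hardness reduction behind Theorem~\ref{Thm:EmptyBoxMain}, exploiting the fact that the volume of a star is a \emph{product} of its side lengths, so that a multiplicative gap confined to a single coordinate already yields a multiplicative gap of the same size in the volume. Concretely, I would revisit the reduction from \textsc{Clique} used to prove that $d$-\textsc{Maximum-Empty-Star} is \wone-hard, and argue that it can be arranged (or lightly post-processed) so that the optimum volume is sharply separated into two cases: if the input graph has a $k$-clique, the constructed instance admits an empty star of some volume $V_{\mathrm{yes}}$, while if it has no $k$-clique, every empty star has volume at most $V_{\mathrm{no}}$, with the ratio $V_{\mathrm{yes}}/V_{\mathrm{no}}$ under our control.

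The second step is gap amplification. In the reduction each dimension encodes a choice, and a non-clique selection is \emph{blocked}: some point forces the star to be cut short in at least one coordinate, whereas a clique selection leaves every coordinate at its maximal value. I would choose the point coordinates so that this forced cut shrinks the offending side length by a factor of at least $2^{|P|+1}$, while every remaining side can still attain its full length. Since the other coordinates are unaffected, the volume of any empty star in a no-instance drops by the same factor, giving $V_{\mathrm{yes}}/V_{\mathrm{no}} > 2^{|P|}$. The coordinate values needed for this have bit-length linear in $|P|$, so the instance is still produced in fpt time and the dimension parameter is unchanged; hence the whole construction remains a valid fpt-reduction.

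Finally I would combine the gap with the definition of approximation. Suppose, for contradiction, that $d$-\textsc{Maximum-Empty-Star} admits an fpt-time algorithm returning a value $A$ with $\mathrm{OPT}/2^{|P|}\le A\le \mathrm{OPT}$. On a yes-instance one has $A\ge V_{\mathrm{yes}}/2^{|P|}$, whereas on a no-instance $A\le \mathrm{OPT}\le V_{\mathrm{no}}$; since $V_{\mathrm{yes}}/2^{|P|}>V_{\mathrm{no}}$, the simple threshold test ``$A>V_{\mathrm{no}}$'' decides the existence of a $k$-clique in fpt time. This would place \textsc{Clique}, parameterized by clique size, in \fpt, contradicting the standard assumption $\fpt\neq\wone$, and thus proving the theorem.

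The main obstacle I anticipate lies in the amplification step: one must verify that forcing an exponentially small side length in the no-case neither destroys the intended empty star in the yes-case nor inadvertently creates a new large empty star that evades the blocking points. In other words, the entire burden is to show that the clique encoding is \emph{robust} under the rescaling -- that blocked selections genuinely lose a full factor of $2^{|P|+1}$ in exactly one coordinate while unblocked (clique) selections keep all coordinates maximal -- and that all coordinate values remain rationals of polynomial bit-length, so that the reduction stays within fpt time.
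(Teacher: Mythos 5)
Your proposal is correct and follows essentially the same route as the paper: the paper's construction already carries a free scaling parameter $\mu$ whose value is exactly the yes/no volume gap (volume $C^k$ with a $k$-clique versus at most $C^k/\mu$ without, by Lemma~\ref{Lemma:MaxEmptyStar}), and the paper obtains the theorem by setting $\mu = 2^{\order(|P|)}$, which still has polynomial bit-length, then applying the same threshold argument you describe. The robustness concern you flag at the end is precisely what Lemma~\ref{Lemma:RegionLemma} and Lemma~\ref{Lemma:MaxEmptyStar} already establish for arbitrary $\mu>1$, so no additional work is needed.
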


Afterwards, we sketch hardness results for some other range spaces, such as boxes or half-spaces.

Finally, we build the connection to the theory of $\eps$-nets. It is known that for many sets, a small random sample will be an $\eps$-net with high probability. This leads to the following decision version.
\begin{definition} ($d$-\textsc{Net-Verification}) Let $P$ be a set of points in $\R^d$, $S$ be a subset of $P$, and an $\eps > 0$. Decide whether $S$ is an $\eps$-net for $P$ with respect to half-spaces.
\end{definition}
Our main theorem in this section shows that this question cannot be answered efficiently.
\begin{theorem}\label{Thm:EpsilonNetMain} The problem $d$-\textsc{Net-Verification} is co-NP-hard and co-W[1]-hard with respect to the dimension.
\end{theorem}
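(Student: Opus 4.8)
Proof proposal for Theorem \ref{Thm:EpsilonNetMain} ($d$-\textsc{Net-Verification} is co-NP-hard and co-W[1]-hard).

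The plan is to reduce from the complement of the net-verification task to a problem I already know to be hard, exploiting the simple observation that $S$ fails to be an $\eps$-net precisely when there exists a ``heavy'' range (a half-space containing at least an $\eps$-fraction of $P$) that $S$ misses entirely. Thus the \emph{complement} of $d$-\textsc{Net-Verification} asks: does there exist a half-space $H$ with $|H \cap P| \geq \eps|P|$ and $H \cap S = \emptyset$? This is an existential, \wone-type question, which is exactly the shape of the hardness results established earlier in the paper. The strategy is therefore to design an fpt-reduction from \textsc{Clique} (or, more conveniently, to piggyback on one of the half-space hardness constructions sketched just before this theorem) so that a large clique corresponds to a large half-space-range that avoids the designated subset $S$.

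First I would fix the reduction skeleton. Starting from a \textsc{Clique} instance $(G,k)$, I would reuse the point configuration from the half-space reductions: embed vertices and edges of $G$ as points in $\R^d$ with $d = \order(k)$, arranged so that a half-space $H$ captures a consistent choice of $k$ vertices together with their $\binom{k}{2}$ edges exactly when those vertices form a clique. The key new ingredient is the choice of $S$ and of $\eps$: I would designate a small ``forbidden'' point set $S \subseteq P$ (for instance, witness points that every \emph{non}-clique half-space is forced to contain) and calibrate $\eps$ so that the threshold $\eps|P|$ equals the number of points a genuine clique-half-space collects. Then $S$ is an $\eps$-net if and only if \emph{no} half-space simultaneously reaches the threshold and avoids $S$, i.e., if and only if $G$ has no $k$-clique. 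Because the reduction keeps $d$ linear in $k$, it is a valid fpt-reduction, yielding co-\wone-hardness; running the same construction with $G$ of unbounded parameter size gives co-NP-hardness.

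The main obstacle I anticipate is the \emph{calibration of the threshold and the placement of $S$}. The $\eps$-net condition is a quantified statement over all large ranges, so I must ensure two things hold simultaneously with no slack: (i) every half-space that clears the threshold $\eps|P|$ and avoids $S$ genuinely encodes a $k$-clique (no spurious heavy ranges), and (ii) a true $k$-clique produces a half-space that both clears the threshold \emph{and} misses $S$ entirely. Controlling the exact cardinality $|H \cap P|$ for half-spaces is delicate, since unlike boxes, half-spaces can ``shave off'' points in ways that are hard to predict; I would address this by placing the points on a suitable convex curve or moment-curve-like surface so that the combinatorics of which points a half-space can separate are completely understood, and by adding a controlled number of padding points to set $\eps$ to a clean value. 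Verifying that the forbidden set $S$ is hit by every sub-threshold or ``cheating'' half-space, while being missed by every legitimate clique half-space, is where the bulk of the casework will lie, but it is routine once the geometric layout is fixed.
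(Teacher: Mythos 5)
Your proposal is correct and is essentially the paper's own argument: the paper reduces from the half-space version of \textsc{Bichromatic-Rectangle} by taking $P$ to be all points, $S = P_r$ (the red points), and $\eps = k/|P|$, so that $S$ fails to be an $\eps$-net exactly when some half-space contains $k$ (necessarily blue) points and no red points, i.e.\ exactly when $G$ has a $k$-clique. The calibration issues you anticipate largely dissolve with this choice of $S$, since ``avoids $S$'' already forces every point captured to be blue and the bichromatic construction does the rest.
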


In all our reductions, the parameter $d$ is kept linear in the input parameter. Using a result by~\cite{CCFHJKX05}, we can derive something stronger.
\begin{corollary}\label{Cor:LowerBoundMain} All of the above problem cannot be solved in time $n^{o(d)}$, unless the Exponential Time Hypothesis fails, i.e., unless \textsc{3-Sat} can be solved in $2^{o(n)}$ time.
\end{corollary}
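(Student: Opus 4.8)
The plan is to combine the W[1]-hardness reductions established in Theorems~\ref{Thm:RedBlueMain}, \ref{Thm:ContinuousDiscrepancyMain}, \ref{Thm:EmptyBoxMain}, and~\ref{Thm:EpsilonNetMain} with the tight parameterized lower bound for \textsc{Clique} due to~\cite{CCFHJKX05}. That result states that, under the Exponential Time Hypothesis (equivalently, if \textsc{3-Sat} cannot be solved in $2^{o(n)}$ time), there is no algorithm deciding whether a graph on $m$ vertices contains a clique of size $k$ in time $m^{o(k)}$. The corollary is then a routine transfer of this bound through our reductions, and the only thing that needs checking is that the reductions are size- and parameter-efficient enough to carry the bound across.

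First I would recall the two quantitative properties that every reduction in this paper enjoys. Each reduction starts from an instance $(G, k)$ of \textsc{Clique} with $m := |V(G)|$ and produces, in polynomial time, an instance of the target problem whose size $N$ is bounded by a polynomial in $m$ with exponent independent of $k$, and whose dimension satisfies $d = \order(k)$, i.e.\ $d \le c\,k$ for a fixed constant $c$. The linear dependence $d = \order(k)$ is exactly the feature highlighted just before the corollary statement, and it is the crucial ingredient: a super-linear blow-up of the parameter would weaken the conclusion to a correspondingly weaker exponent.

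Next I would argue by contraposition. Suppose one of the listed problems admitted an algorithm running in time $N^{o(d)}$. Composing it with the reduction, and substituting $N = m^{\order(1)}$ and $d = \order(k)$, yields an algorithm for \textsc{Clique} running in time $\bigl(m^{\order(1)}\bigr)^{o(k)} = m^{o(k)}$, contradicting the lower bound of~\cite{CCFHJKX05} and hence refuting ETH. For the $d$-\textsc{Net-Verification} problem, which is only co-W[1]-hard, the same deduction applies verbatim: a deterministic $N^{o(d)}$-time decision procedure for the problem also decides its complement within the same time bound, so the co-W[1]-hardness reduction still produces an $m^{o(k)}$ algorithm for \textsc{Clique}.

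I expect no serious obstacle here, since the content of the corollary lies entirely in the linear parameter bound of the reductions, which was arranged when they were constructed. The only point demanding care is the bookkeeping: one must confirm that the polynomial size bound $N = m^{\order(1)}$ has an exponent that does not itself grow with $k$, as otherwise the substitution collapsing $N^{o(d)}$ into $m^{o(k)}$ would fail. All of the reductions in the preceding sections are of this form, so the statement follows for each of the listed problems simultaneously.
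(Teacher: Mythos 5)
Your proposal is correct and follows essentially the same route as the paper, which justifies the corollary in one line by noting that every reduction keeps $d$ linear in $k$ and invoking the $m^{o(k)}$ lower bound for \textsc{Clique} from~\cite{CCFHJKX05}. Your additional bookkeeping (the size bound $N = m^{\order(1)}$ with exponent independent of $k$, which holds since all constructions have $\order(k^2n^2)$ points with $k \le n$, and the remark that the co-W[1]-hard case transfers verbatim) only makes explicit what the paper leaves implicit.
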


These results are obtained by fpt-reductions from
the W[1]-complete $k$-\textsc{Clique} problem in general graphs, see~\cite{DF99}, based
on the general framework by~\cite{cgkmr-gcfpt-09, CGKR08}.

\subsection{Related work.}
When the dimension is part of the input, the \textsc{Bichromatic-Rectangle} problem was shown to be $NP$-hard by~\cite{606903}; in the same paper an $\order(n^{2d+1})$-time algorithm
was given. \cite{BK09} gave an algorithm that runs in
$\order(k\log^{d-2}n)$ time, where $k$ is the number of feasible boxes
that are not properly contained in any feasible box, and showed
that $k$ can be $\Theta(n^d)$ in the worst case. \cite{AH08} gave an $(1-\eps)$-approximation algorithm that
runs in $\order(n^{\lceil d/2\rceil}(\eps^{-2}\log n)^{\lceil d/2 +1\rceil})$ time.

The \textsc{Star-Discrepancy} problem has been shown to be $NP$-hard by~\cite{GSW09}.  An exact algorithm that runs in $\order(n^{1+d/2})$ time was given
by~\cite{DEM96}. \cite{DBLP:journals/jc/Thiemard01} has given
an approximation algorithm that achieves additive error and runs in
fpt-time with respect to the error and the dimension. However, as~\cite{DBLP:journals/jc/Gnewuch08} noted, by setting the error
tolerance to the same order as the discrepancy of an optimal point set, so
that a constant-factor approximation is achieved, the running time of any
algorithm following Thi\'emard's approach becomes $n^{O(d)}$. 
As for the \textsc{Box-Discrepancy}, no hardness results where known so far.

The \textsc{Maximum-Empty-Box} problem has been studied extensively in the
planar case, see for example~\cite{AS87} and references
therein.  When the dimension is part of the input, the problem has only
recently been shown to be $NP$-hard by~\cite{BK10} and the fastest exact
algorithm runs in time $\order(n^d\log^{d-2}n)$~\cite{BK10}. 
Also recently,~\cite{DJ09} gave an $\order((8ed\eps^{-2})^d\cdot 
n\log^{d}n)$-time $(1-\eps)$-approximation algorithm for this problem.
Note that, since $(\log n)^k < n + f(k)$ for some $f(k)$, this counts as fpt time 
in parameters $1/\epsilon$ and $d$, in contrast to our results for \textsc{Maximum-Empty-Star}.
The NP-hardness of the \textsc{Maximum-Empty-Star} problem was shown by~\cite{GSW09}.

\section{Red-Blue Discrepancy and the Bichromatic Rectangle Problem}\label{Sec:BichromaticRectangle}
In order to show the hardness of \textsc{Red-Blue-Discrepancy}, we will first consider the \textsc{Bichromatic-Rectangle} problem. The parameterized decision problem is defined as follows:
\begin{definition} ($k$-$d$-\textsc{Bichromatic-Rectangle})
Let $P_r$ be a set of red points and a set $P_b$ be a set of blue points in $\Rd$, and $k \in \N$. Decide whether there is an axis-parallel box such that $H \cap P_r = \emptyset$ and $|H \cap P_b| \geq k$? 
\end{definition}

A box that does not contain any point from $P_r$ will be called \emph{feasible}. For a given set of points $P = P_r \uplus P_b$, let 
\[ E_{\calB}(P_r, P_b) = \max_{B \in \calB, B \cap P_r = \emptyset} |B \cap P_b|\]
denote the size of an optimal solution. Recall that $\calB$ is the set of all axis-parallel boxes inside the unit cube.

\subsection{The idea.}
In order to show that the $k$-$d$-\textsc{Bichromatic-Rectangle} problem
is $W[1]$-hard, we will give a reduction from the $k$-\textsc{Clique} problem. For a given simple graph $G = ([n], E)$ we will construct sets
$P_r = P_r(G, k)$ and $P_b = P_b(G, k)$ in $\R^{2k}$ such that $G$ has a clique of size
$k$ if and only if $E(P_r, P_b) = k+1$.

In addition to the (blue) origin $0$, we will put blue and red points into
$k$ pairwise orthogonal two-dimensional planes. These points will be used to
encode the vertices of $G$. Additional red points will then be used to encode the
edge-set of $G$.

Each plane will contain $n$ blue points, corresponding to the vertices of
the graph, and $n-1$ red points. The red points are placed such that no feasible box can contain more than one blue
point from a single plane. Thus, at most $k$ of these blue points can be
contained in any feasible box. We will then ensure that
such a box can only contain points $x$ and $y$ from two
different planes if the corresponding vertices are connected in $G$. This
is done by putting red points into the product of the respective
planes (which is a four-dimensional subspace). 

This construction will ensure that any feasible box containing $k+1$ blue points corresponds to a 
$k$-clique in $G$, and vice versa.

\subsection{Preparations.}
For $1 \leq i \leq k$, we define the two-dimensional subspace
\[ \R_i^2 = \left\{ (x_1, y_1, \dots, x_k, y_k) \mid x_j = y_j = 0, j \neq
  i) \right\} \subseteq \R^{2k}. \]
For $1 \leq i < j \leq k$, we set $\R_{ij}^4$ to be the product of $\R_i^2$ and $\R_j^2$, i.e., $\R_{ij}^4 = \R_i^2 \times \R_j^2$.\\
For $p \in \R_i^2$ and $q \in \R_j^2$, observe that the unique
point in $\R_{ij}^4$ that (orthogonally) projects to $p$ (into $\R_i^2$)
and to $q$ (into $\R_j^2$) is $p + q$.

\subsection{The scaffold construction.}
Let $\eps = 1/4$. For a vertex $1 \leq v \leq n$, we define the point
 \[ b_i(v) = (v, n + 1 - v) \in \R_i^2. \] 
Then, let
\[ (P_b)_i^{\text{scaffold}} = \{ b_i(1), \dots, b_i(n) \} \subseteq \R_i^2 \]
be the set of all points in the $i$-th plane.
Choosing a (rectangle containing) point $b_i(v)$ will correspond to
choosing vertex $v$ from $G$. 
Let \[ \scaff P_b = \biguplus_{1 \leq i \leq k} \scaff {(P_b)}_i \] be the set of all these blue points.

As we want the feasible boxes to contain at most one point from
each $\R_i^2$, we add a set of red points as follows: For $1 \leq v \leq
n-1$, we define $r_i(v) = (v + 1/2, n + 1 - (v + 1/2))$ and set
\[ (P_r)_i^{\text{scaffold}} = \{ r_i(1), \dots, r_i(n - 1) \} \subseteq \R_i^2. \]
Finally, we define
\[ \scaff P_r = \biguplus_{1 \leq i \leq k} \scaff {(P_r)}_i \]
to be the set of all red scaffolding points. See Figure \ref{fig:ScaffoldSelection} for an example of the scaffold construction.
\begin{figure}
	\centering
		\includegraphics[width=0.50\textwidth]{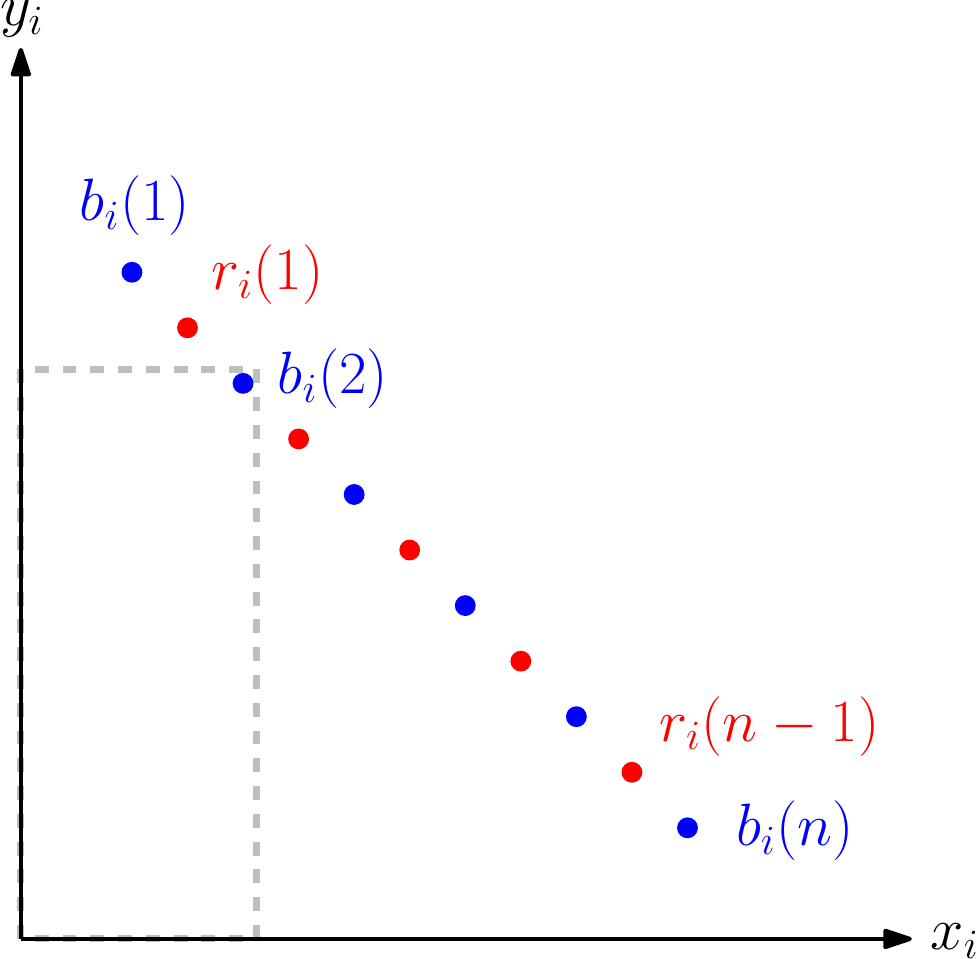}
	\caption{The scaffold construction with vertex $2$ selected.}
	\label{fig:ScaffoldSelection}
\end{figure}
Observe that an feasible box $B$ can contain at most one blue point from each $\scaff{(P_b)}_i$.

\subsection{Encoding edges.}
In order to encode the edges of the graph, we will place several red points
between pairs of $\R_i^2$s. This will forbid certain pairs of blue points to be
selected at the same time, namely the ones that correspond to vertices not being connected in $G$.

Thereto, for $1 \leq i \leq j \leq k$
and vertices $1 \leq u, v \leq n$, we define the point
\[ r_{ij}^{\text{kill}}(uv) = b_i(u) + b_j(v) \in \R_{ij}^4.\]
The crucial property of such a point is that it is inside a box (containing the origin) if and only if both $b_i(u)$ and $b_j(v)$ are inside this box.

The set of all killing points in $\R_{ij}^4$ is then
\[(P_r)_{ij}^{\text E} = \{ r_{ij}^{\text{kill}}(uv), r_{ij}^{\text{kill}}(vu)
\mid uv \notin E \}.\] 
As the graph is simple (i.e., contains no loops), so all
points of the form $r_{ij}(uu)$ are also added. Finally, we set
\[ P_r^{\text E} = \biguplus_{1 \leq i \neq j \leq k} {(P_r)}_{ij}^{\text E}\]
to be the set of all killing points. See Figure \ref{fig:KillingPoints} for an example where $uv \notin E$.
\begin{figure}[htbp]
	\centering
		\includegraphics[width=0.7\textwidth]{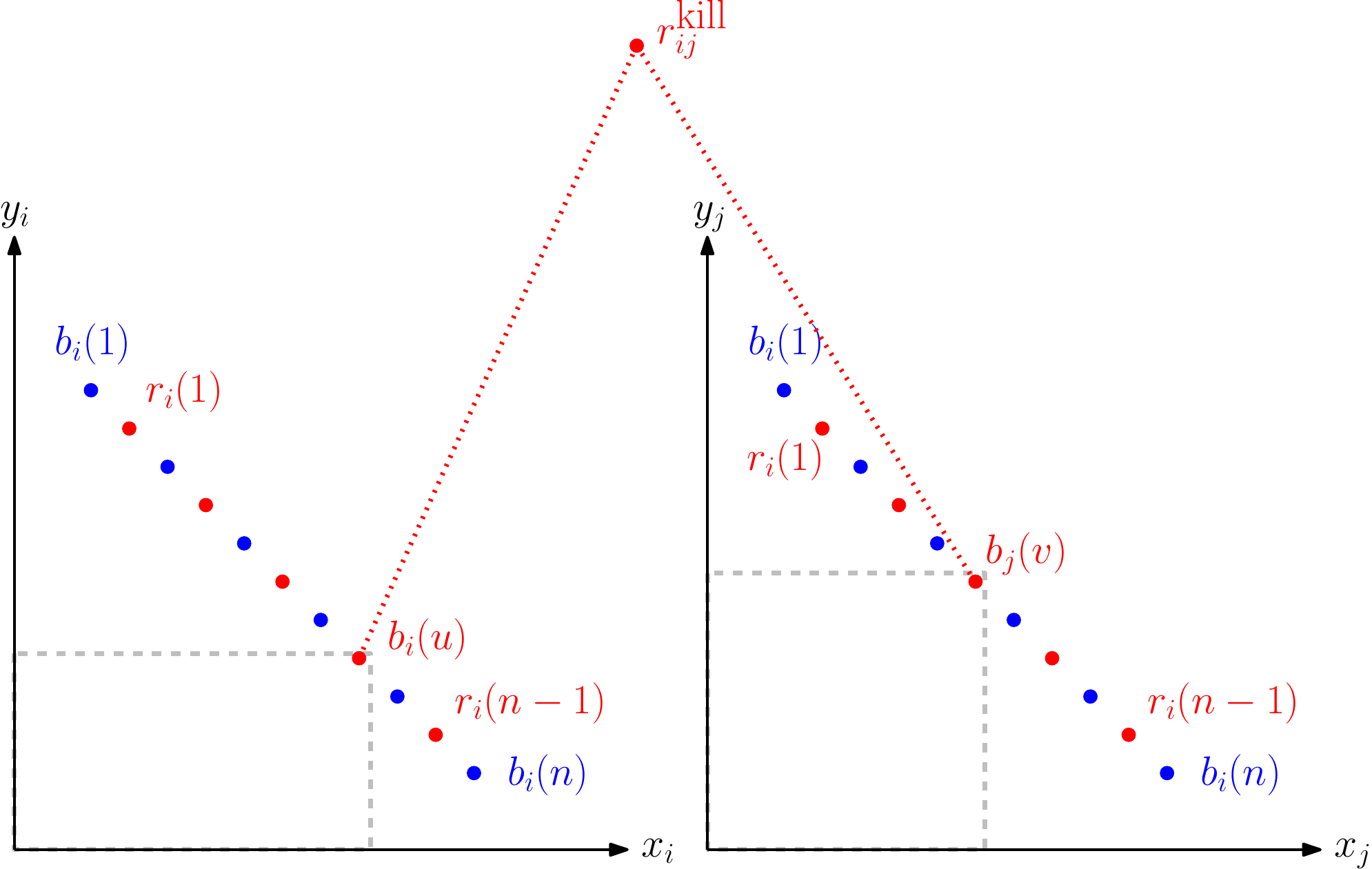}
	\caption{$b_i(u)$ is the projection of $\kill r_{ij}(uv)$ to $\R_i^2$ and $b_j(v)$ is the projection of $\kill r_{ij}(uv)$ to $\R_j^2$.}
	\label{fig:KillingPoints}
\end{figure}

\subsection{The overall construction.}
For $G = ([n], E)$ and $k > 0$ we construct point sets $P_r(G, k), P_b(G, k)$
in $\R^{2k}$ as follows:
\begin{itemize}
\item $P_r(G, k) = P_r^{\text{scaffold}} \cup P_r^{\text E}$
\item $P_b(G, k) = \{ 0 \} \cup \scaff P_b$
\end{itemize}
The size of the point set is $\order(k^2n^2)$ and the coordinates of the
points can be encoded by $\order(\log kn)$ many bits.  Clearly the construction
can be performed in time polynomial in both $k$ and $n$.

\begin{lemma}\label{Lemma:RedBlueMain} $G$ has a $k$-clique if and only if $E_{\calB}(P_r, P_b) = k+1$.
\end{lemma}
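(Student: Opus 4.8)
The plan is to prove the two inequalities $E_{\calB}(P_r, P_b) \le k+1$ and, given a $k$-clique, $E_{\calB}(P_r, P_b) \ge k+1$, and then to show that equality forces a clique. First I would record the structural fact underlying everything: inside each plane $\R_i^2$ all scaffold points lie on the anti-diagonal $x_i + y_i = n+1$, with the blue points $b_i(1), \dots, b_i(n)$ at integer abscissae and the red points $r_i(v)$ strictly interleaving consecutive blue ones. The trace of any axis-parallel box on $\R_i^2$ selects a contiguous range of abscissae along this anti-diagonal, so if it caught two blue points $b_i(v), b_i(v')$ with $v<v'$ it would also catch the interleaving red point $r_i(v)$; hence every feasible box contains at most one point of $\scaff{(P_b)}_i$. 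Summing over the $k$ planes and adding the single blue origin gives $E_{\calB}(P_r, P_b) \le k+1$ unconditionally. Moreover, since at most $k$ of the $k+1$ counted blue points can be scaffold points, any feasible box attaining $k+1$ must contain the origin and exactly one scaffold point per plane; this is the observation that supplies the extra $+1$ and, crucially, guarantees that an optimal box is a star, i.e.\ all of its intervals contain $0$.

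For the forward direction, given a $k$-clique I would assign its vertices $v_1, \dots, v_k$ bijectively to the $k$ planes and take $B$ to be the axis-parallel bounding box of $\{0, b_1(v_1), \dots, b_k(v_k)\}$; concretely $B$ restricts to $[0, v_i] \times [0, n+1-v_i]$ in each plane $\R_i^2$, a nondegenerate box containing exactly the $k+1$ intended blue points. It remains to verify feasibility. No red scaffold point intrudes, by the interleaving argument, since the trace of $B$ on $\R_i^2$ meets the anti-diagonal only at $b_i(v_i)$. For the killing points, observe that the restriction of $B$ to $\R_{ij}^4$ is the product of the two plane-boxes, so $r_{ij}^{\text{kill}}(uv) = b_i(u) + b_j(v)$ lies in $B$ exactly when $b_i(u)$ and $b_j(v)$ both do, i.e.\ only for $u = v_i$ and $v = v_j$. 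But $r_{ij}^{\text{kill}}(v_i v_j)$ is placed as a red point only if $v_i v_j \notin E$, whereas $v_i$ and $v_j$ are adjacent in the clique; hence no killing point is red, $B$ is feasible, and $E_{\calB}(P_r, P_b) = k+1$.

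For the backward direction I would start from a feasible box $B$ realizing $k+1$ blue points; by the bound above it contains the origin together with one scaffold point $b_i(v_i)$ from each plane. Because $B$ contains the origin, I can invoke the crucial property of killing points: for every pair $i \ne j$ the point $r_{ij}^{\text{kill}}(v_i v_j) = b_i(v_i) + b_j(v_j)$ lies in $B$ precisely because both $b_i(v_i)$ and $b_j(v_j)$ do. Feasibility of $B$ then forbids this point from being red, so by the definition of $(P_r)_{ij}^{\text E}$ we must have $v_i v_j \in E$; in particular $v_i \ne v_j$, since all loop points $r_{ij}^{\text{kill}}(uu)$ are present as red points. Ranging over all pairs shows that $v_1, \dots, v_k$ are pairwise distinct and pairwise adjacent, hence a $k$-clique.

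I expect the main obstacle to be the careful bookkeeping that pins down which red points a candidate box can contain. The interleaving argument for the scaffold reds is routine, but the product-space analysis of the killing points — verifying that the equivalence \emph{a killing point lies in the box if and only if both of its plane-projections do} is exact, that the only killing point a minimal box can capture is the intended $r_{ij}^{\text{kill}}(v_i v_j)$, and that it is precisely the presence of the origin that validates this equivalence — is where the reduction must be handled with precision. Everything else, namely the size bound $\order(k^2 n^2)$ and polynomial-time constructibility already noted, is immediate, and rescaling the coordinates into the unit cube affects no containment relation.
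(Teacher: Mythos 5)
Your proof is correct and follows essentially the same route as the paper's: the per-plane bound of one blue scaffold point (which forces any feasible box with $k+1$ blue points to contain the origin), the box spanned by $0$ and the chosen $b_i(v_i)$ for the forward direction, and the killing-point equivalence for the backward direction. The only cosmetic differences are that you argue the backward direction directly rather than by contraposition from the nonexistence of a clique, and that you spell out the interleaving and loop-point details the paper leaves implicit.
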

\begin{proof} First, observe that any feasible box $B$ can contain at most $k+1$ points, as $|B \cap \scaff {(P_b)}_i| \leq 1$, for $1 \leq i \leq k$. Additionally, $0$ can be in $B$. 

Let $v_1, \dots , v_k$ be a clique of size $k$. We choose a (closed) box $B$ with upper right corner $b_i (v_i)$ in $\R^2_i$. 

$B$ contains exactly one point from each of the $\R_i^2$, and also the origin, making it a total of $k+1$ points. We show that $B$ is feasible. First, by definition $B$ contains no point of $\scaff{P_r}$. Further, assume that $B$ contains a point of $P_r^E$, say $\kill{r}_{ij}(uv) = b_i(u) + b_j(v) \in {(P_r)}_{ij}^E$. Then $B$ contains both $b_i(u)$ and $b_j(v)$. But this means $uv \notin E$, for otherwise the point $\kill{r}_{ij}(uv)$ would not have been added; a contradiction.

Now assume that there is no clique of size $k$. Let $B$ be any
box containing $k+1$ points. We show that $B$ is infeasible. If
$B$ contains a red point from one of the $\R_i^2$'s, we are
done. Otherwise, besides the origin it can contain at most one blue point from each
$\R_i^2$. Let $u = v_i$ and $v = v_j$ be two vertices corresponding to blue
points contained in $B$ that are not connected in $G$. As there is no
$k$-clique, such a pair must exist. Then $B$ also contains the red point $\kill{r}_{ij}(uv)$. Thus,
$B$ is infeasible. \qed
\end{proof}
\begin{theorem} The $k$-$d$-\textsc{Bichromatic-Rectangle} problem is $W[1]$-hard when parameterized with both the dimension $d$ and the size of the solution $k$.
\end{theorem}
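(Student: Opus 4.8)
The plan is to read off the theorem as the wrap-up of the construction: the sets $P_r(G,k)$ and $P_b(G,k)$ together with the threshold $k+1$ constitute an fpt-reduction from the $W[1]$-complete $k$-\textsc{Clique} problem (see~\cite{DF99}), parameterized by the clique size $k$, to $k$-$d$-\textsc{Bichromatic-Rectangle}, parameterized by the pair consisting of the dimension $d$ and the solution size. Essentially all of the combinatorial content is already isolated in Lemma~\ref{Lemma:RedBlueMain}, so what remains is to verify (i) that the lemma translates correctly into the decision version of the problem, and (ii) that the map respects the fpt-reduction bookkeeping.

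For (i), given an instance $(G,k)$ of $k$-\textsc{Clique}, I would output the \textsc{Bichromatic-Rectangle} instance $(P_r(G,k), P_b(G,k))$ with target value $k+1$, asking whether some feasible axis-parallel box contains at least $k+1$ blue points. First I would recall the uniform upper bound $E_\calB(P_r,P_b)\le k+1$ established at the start of the proof of Lemma~\ref{Lemma:RedBlueMain}: a feasible box meets each scaffold plane $\scaff{(P_b)}_i$ in at most one blue point, and may additionally contain the origin. Hence the decision answer is \emph{yes} precisely when $E_\calB(P_r,P_b)=k+1$, which by Lemma~\ref{Lemma:RedBlueMain} happens exactly when $G$ has a $k$-clique. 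This makes the correctness of the decision reduction immediate from the lemma.

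For (ii), I would check the two quantitative requirements of an fpt-reduction. The construction lives in $\R^{2k}$, produces $\order(k^2 n^2)$ points with coordinates of $\order(\log kn)$ bits, and can be carried out in time polynomial in $n$ and $k$; in particular it runs in fpt-time. The target parameters are the dimension $d=2k$ and the solution size $k+1$, both of which are bounded by the computable function $g(k)=2k+1$ of the source parameter $k$. Thus the map is a genuine fpt-reduction, and since $k$-\textsc{Clique} is $W[1]$-hard with respect to $k$, the theorem follows.

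I do not expect a serious obstacle here, since the difficult step---forcing feasible boxes to select at most one vertex per plane and to respect non-edges via the killing points---was already handled in Lemma~\ref{Lemma:RedBlueMain}. The only points that require genuine (if minor) care are the clean passage from the equality $E_\calB=k+1$ in the lemma to the inequality ``$\ge k+1$'' in the decision problem, which hinges on the universal upper bound above, and the confirmation that both target parameters stay linear in $k$ so that the reduction does not inflate the parameter beyond a computable function of $k$.
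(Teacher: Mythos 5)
Your proposal matches the paper's argument exactly: the theorem is stated there as an immediate consequence of the construction and Lemma~\ref{Lemma:RedBlueMain}, and the bookkeeping you supply (the uniform bound $E_\calB\le k+1$, the parameters $d=2k$ and $k+1$ bounded by a computable function of $k$, and the polynomial running time) is precisely what the paper relies on implicitly. The proof is correct and takes essentially the same route.
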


As noted by~\cite{DBLP:journals/ipl/CesatiT97} an optimization problem that is $W[1]$-hard when parameterized by the size of the solution is unlikely to have 
an approximation scheme that runs in $f(\eps)n^c$, i.e, an efficient polynomial-time approximation scheme (EPTAS). In our case, since the problem is
hard with respect to both the dimension and the size of the solution, this implies the following:

\begin{corollary}
  The (optimization version of the)
  $k$-$d$-\textsc{Bichromatic-Rectangle} problem does not admit an
  approximation scheme that runs in $\order(f(1/\eps, d)\cdot
  \textup{poly}(n,d))$ time, unless $W[1]=FPT$.
\end{corollary}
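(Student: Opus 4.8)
The plan is to argue by contradiction, exploiting the integrality of the objective together with the tightness of the optimum produced by the reduction of Lemma~\ref{Lemma:RedBlueMain}. Suppose such a scheme $\mathcal{A}$ existed: on any instance and any $\eps>0$ it returns a \emph{feasible} box $B$ with $|B\cap P_b|\ge(1-\eps)\,E_\calB(P_r,P_b)$ in time $\order(f(1/\eps,d)\cdot\textup{poly}(n,d))$. Given an instance $(G,k)$ of $k$-\textsc{Clique}, I would first run the construction underlying the preceding theorem to obtain $P_r(G,k),P_b(G,k)$ in dimension $d=2k$. By Lemma~\ref{Lemma:RedBlueMain} the optimum $E_\calB(P_r,P_b)$ equals $k+1$ when $G$ has a $k$-clique and is at most $k$ otherwise; crucially, every feasible solution has an integer value. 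This integer gap between $k+1$ and $k$ is exactly what I want the approximation to resolve.

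The key step is to pick $\eps$ small enough to separate the two cases. Setting $\eps=\tfrac{1}{k+2}$ (any $\eps<\tfrac{1}{k+1}$ works, since then $(1-\eps)(k+1)>k$), I would invoke $\mathcal{A}$ once on the constructed instance. If $E_\calB=k+1$, the returned box has value at least $(1-\eps)(k+1)>k$, and being a nonnegative integer it must in fact equal $k+1$; if $E_\calB\le k$, the returned value is at most $E_\calB\le k$. Hence comparing the value of the box produced by $\mathcal{A}$ against the threshold $k$ decides whether $G$ contains a $k$-clique.

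It remains to check the running time. With $1/\eps=k+2$ and $d=2k$, the single call to $\mathcal{A}$ costs $\order(f(k+2,2k)\cdot\textup{poly}(n,k))$, which is fpt-time in the parameter $k$. This yields an fpt algorithm for $k$-\textsc{Clique} and therefore forces $W[1]=FPT$, contradicting the assumed hardness. The argument is otherwise the standard EPTAS-to-exact-solver reduction in the style of~\cite{DBLP:journals/ipl/CesatiT97}.

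The only delicate point—and the reason hardness in \emph{both} $k$ and $d$ is needed—is the absorption of $1/\eps$ into the parameter: choosing $\eps=\Theta(1/k)$ makes $f(1/\eps,d)=f(\Theta(k),2k)$ a function of $k$ alone, which is legitimate precisely because the scheme's dependence is fpt in $1/\eps$ and $d$ together, and $d=2k$ is itself governed by $k$. Were the running time permitted to grow super-polynomially in $d$ independently of $\eps$, this collapse would not go through; so the two-parameter hardness established above is exactly what the corollary consumes.
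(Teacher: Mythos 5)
Your proof is correct and is precisely the argument the paper invokes: the paper simply cites the Cesati--Trevisan observation that $W[1]$-hardness in the solution size (here, jointly with the dimension $d=2k$) rules out an EPTAS, and that observation is exactly the integrality-gap argument with $\eps<\tfrac{1}{k+1}$ that you spell out, including the point that hardness in both parameters is what lets $f(1/\eps,d)$ collapse to a function of $k$.
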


\subsection{Adaption to Red-Blue Discrepancy}
In order to adapt this proof to the \textsc{Red-Blue-Discrepancy} problem, we have to modify the set in such a way that a clique corresponds to a set with high discrepancy. Thereto, let $N$ be the total number of points in the construction. We replace the blue point at the origin by $N$ copies. Now the value
\[ \dot D_{\mathcal R}\left(P_r, P_b\right) := \max_{R \in \mathcal R} \left| | R \cap P_r | - | R \cap P_b | \right| \]
is maximized for a box with many blue points inside (as it is at least $N$). Observing that, for each $\R^2_i$, the difference between blue and red points is at most one, we can follow the above reasoning. This means there is a box with $N + k$ blue points (and no red points) inside, if and only if $G$ has a $k$-clique. This proves Theorem \ref{Thm:RedBlueMain}.

\section{The Maximum Empty Star Problem}\label{Sec:MaximumEmptyStar}
We now turn to the continuous version of the problems. In this section, we consider the problem where we have to find an empty axis-parallel box inside the unit-cube of maximal volume that contains the origin, namely \textsc{Maximum-Empty-Star}.
Besides showing the W[1]-hardness, our construction yields that it is even $W[1]$-hard to approximate this problem by a factor of $\left(\frac{1}{2}\right)^{|R|}$.

The proof uses ideas similar to the discrete case in Section \ref{Sec:BichromaticRectangle}. As above, because the origin has to be included in the boxes, the planes will be considered separately. In this construction, the analog of a rectangle containing a blue point from one of the $\R_i^2$ is now a rectangle that is "large" (of size $C$ for some $0 < C < 1$ to be determined later). In each plane, there will be $n$ large rectangles to choose from, corresponding to the $n$ vertices of $G$. It will only be possible to choose large rectangles from two different planes, if the corresponding vertices are connected in $G$. This yields a one-to-one correspondence between "large" empty boxes and cliques of size $k$.

\subsection{The Construction.}

We will proceed as follows: First, we determine where the upper right
corners of the $n$ large rectangles have to be. From this, we will
determine the blocking points (which are the analog of the red points
above) that are needed for this.

Let $\mu > 1$ be a parameter to be specified later. One possibility to
determine the upper right corners of the rectangles, each having area $C =
\frac{1}{\mu^{n-1}}$ in one $\R_i^2$, is as follows:
\[ c_i(u) = \left(C\mu^{u-1}, \frac{1}{\mu^{u-1}}\right), 1 \leq u \leq
n.\] We now place points such that any maximal empty (open)
rectangle, i.e., a rectangle supported by two points, has its upper right
corner at one $c_i(u)$.  This can be realized by the following
blocking points:
\[ p_i(u) = \left(C\mu^{u-1}, \frac{1}{\mu^u}\right), 0 \leq u \leq n. \]
We set $\scaff P_i = \{ p_i(u) \mid 0 \leq u \leq n\}$ and $\scaff P =
\uplus_{1 \leq i \leq k} \scaff P_i$. 

Thus, in each $\R_i^2$, we have $n$ choices for the upper right corner of the rectangles: the points $c_i(u)$, $1 \leq u \leq n$.
If a rectangle has its upper right point somewhere
else on $(x, C/x)$ or above, it contains a point from $\scaff P$, and any other
feasible rectangle has smaller size. 

Choosing a large rectangle in
each of the $\R_i^2$ gives us an empty rectangle of total volume $C^k$. See
Figure \ref{fig:ChristiansConstruction} for an example.

\begin{figure}[htbp]
	\centering
		\includegraphics[width=0.60\textwidth]{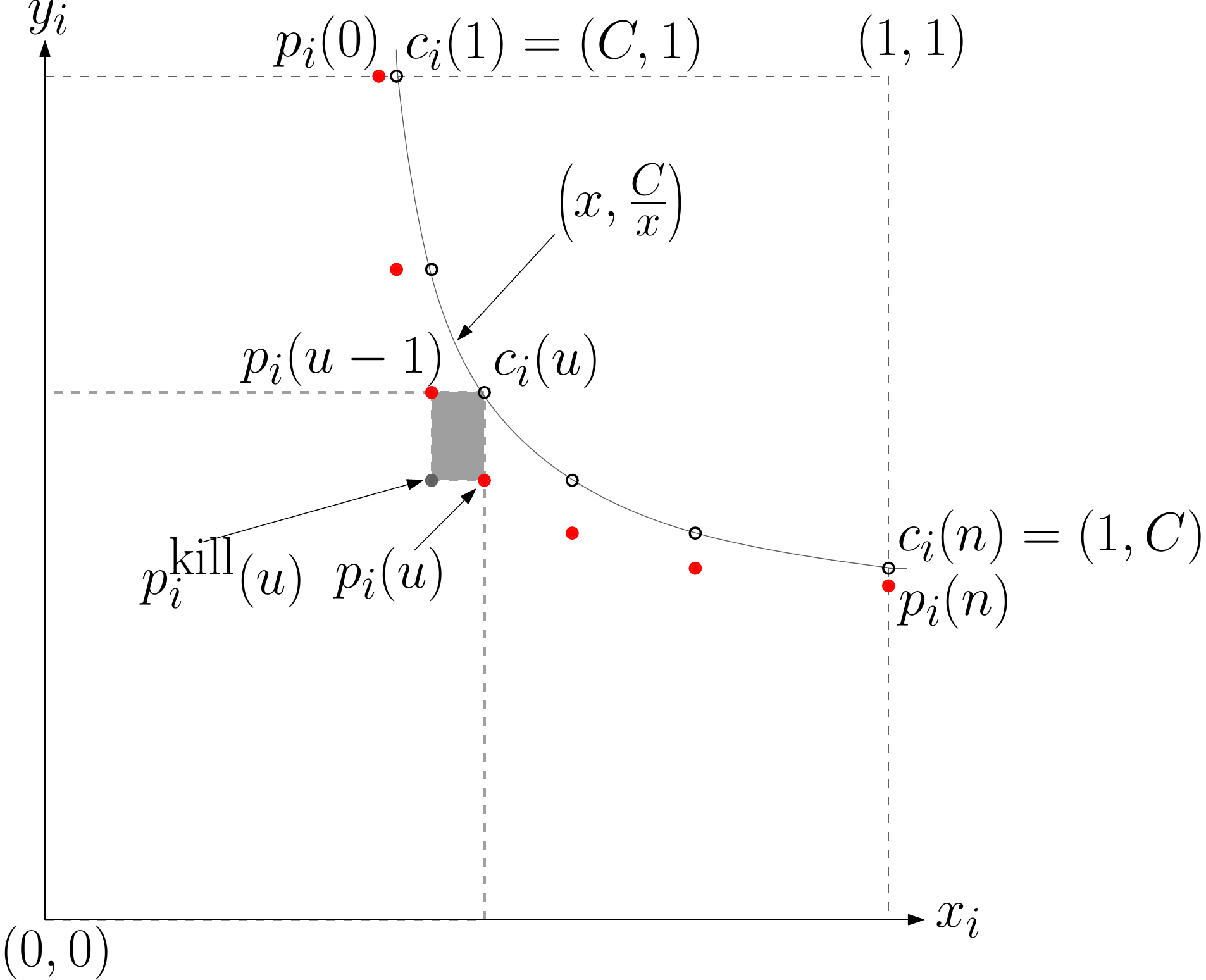}
	\caption{The plane $\R_i^2$. A rectangle selecting vertex $u$ is indicated.}
	\label{fig:ChristiansConstruction}
\end{figure}

\subsection{Encoding the edges.}
As above, if the vertices corresponding to two different large rectangles in
the planes $\R_i^2$ and $\R_j^2$ are not connected, we will add a point in
the product $\R_{ij}^4$ that forbids these two rectangles to be chosen at
the same time. To this end, we let 
\[ \kill p_i(u) = (C \mu^{u-2},
1/\mu^{u}) \] 
(these points are themselves not added to the set $P$)
and then define $\kill p_{ij}(uv) = \kill p_i(u) + \kill p_j(v)$.
Recall that this also includes all points of the form $p_{ij}(uu)$. Then we set
\[ P^{\textnormal E} = \{\kill p_{ij}(uv) \mid i \neq j, uv \notin E  \}, \]
and
\[ P = P^{\textnormal E} \cup \scaff P. \] 
The size of $P$ is $\order(n^2
k^2)$. If we set $\mu=2$, all coordinates have size polynomial in the size
of the input. We also let $V=C^k=1/\mu^{k(n-1)}$. Clearly the construction can be
performed in time polynomial in $k$ and $n$.

Now we come to prove the correctness of the construction. Let $F_i(u)$ be the rectangle with corners $p_i(u-1)$, $c_i(u)$, $p_i(u)$, $\kill p_i(u)$, as indicated in Figure \ref{fig:ChristiansConstruction}.

\begin{lemma}\label{Lemma:RegionLemma} Any feasible rectangle in $\R_i^2$ that does not intersect any region $F_i(u)$, $1 \leq u \leq n$, has size at most $C/\mu$.
\end{lemma}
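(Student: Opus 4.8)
The plan is to work entirely inside the single plane $\R_i^2$ and to exploit the fact that a feasible rectangle here is a \emph{star}: since it must contain the origin and lie in the unit square, it has the form $R = [0,a]\times[0,b]$, so its size is just the area $ab$. First I would record the coordinates of the relevant points. The blocking points $p_i(u) = (C\mu^{u-1}, 1/\mu^u)$ all lie on the hyperbola $xy = C/\mu$, while the region $F_i(u)$ is exactly the axis-parallel box $[C\mu^{u-2}, C\mu^{u-1}]\times[1/\mu^u, 1/\mu^{u-1}]$ (read off from its four named corners $p_i(u-1), c_i(u), p_i(u), \kill p_i(u)$). Thus the $F_i(u)$ form a descending staircase whose right edges sit at the abscissae $x = C\mu^{u-1}$. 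The goal is to show that if $R$ meets none of these boxes, then $ab \le C/\mu$.

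The main step is a bookkeeping argument that locates the upper right corner $(a,b)$ of $R$ relative to this staircase. I would set $u^{\ast} = \max\{\,u \in \{0,\dots,n\} : a > C\mu^{u-1}\,\}$. If this set is empty, i.e.\ $a \le C\mu^{-1} = C/\mu$, then $ab \le (C/\mu)\cdot 1 = C/\mu$ and we are done. Otherwise, since $a \le 1 = C\mu^{n-1}$ the index $u=n$ fails the defining inequality, so $u^{\ast} \le n-1$ and $u^{\ast}+1$ is a valid index in $\{1,\dots,n\}$. By the choice of $u^{\ast}$ we obtain the sandwich $C\mu^{u^{\ast}-1} < a \le C\mu^{u^{\ast}}$.

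Now I would test $R$ against the single region $F_i(u^{\ast}+1) = [C\mu^{u^{\ast}-1}, C\mu^{u^{\ast}}]\times[1/\mu^{u^{\ast}+1}, 1/\mu^{u^{\ast}}]$. Because $a > C\mu^{u^{\ast}-1}$, the $x$-projection $[0,a]$ of $R$ already overlaps the $x$-range of this box; hence the only way for $R$ to avoid $F_i(u^{\ast}+1)$ is for the $y$-projections to be disjoint, which forces $b \le 1/\mu^{u^{\ast}+1}$. Combining this with $a \le C\mu^{u^{\ast}}$ gives
\[ ab \;\le\; C\mu^{u^{\ast}}\cdot\frac{1}{\mu^{u^{\ast}+1}} \;=\; \frac{C}{\mu}, \]
as claimed.

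The step I expect to require the most care is not the algebra but the handling of the boundary cases and the open/closed distinction. The extremal rectangle $[0,C/\mu]\times[0,1]$ shows that the bound $C/\mu$ is exactly attained, so I must be precise about whether ``intersecting $F_i(u)$'' means meeting its interior or its closure, and correspondingly whether each inequality is strict. I would fix the convention that the $p_i(u)$ are forbidden only in the open rectangle and that $F_i(u)$ is met when interiors overlap; with that convention the sandwich inequalities above are exactly the ones that make the final product collapse to $C/\mu$. A secondary point to verify is that $u^{\ast}$ is well defined at both ends, which is why I treat the empty-set case ($a \le C/\mu$) separately and use $a \le 1$ to cap $u^{\ast}$ at $n-1$.
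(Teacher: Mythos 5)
Your proof is correct and follows essentially the same route as the paper's: the paper simply asserts that the upper-right corner of such a rectangle must lie below the hyperbola $xy = C/\mu$ passing through the points $p_i(u)$, and your staircase case analysis (sandwiching $a$ between consecutive abscissae $C\mu^{u^{\ast}-1}$ and $C\mu^{u^{\ast}}$ and then forcing $b \le 1/\mu^{u^{\ast}+1}$ via the single region $F_i(u^{\ast}+1)$) is a rigorous expansion of exactly that one-line claim. The boundary and open/closed caveats you flag are real but are likewise glossed over in the paper's version.
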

\begin{proof} Such a rectangle has its upper right point below the graph going through the points $p_i(u)$, $1 \leq u \leq n$, which is $(x, \frac{C}{x\mu})$. \qed
\end{proof}
We use this to prove the main Lemma, the continuous analog of Lemma~\ref{Lemma:RedBlueMain}:
\begin{lemma}\label{Lemma:MaxEmptyStar} $G$ has a $k$-clique if and only if there is an empty
  star of size $V=C^k$. Further, if $G$ does not have a $k$-clique, the largest empty star has volume $C^k/\mu$.
\end{lemma}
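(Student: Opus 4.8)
The plan is to prove the biconditional by constructing, from a $k$-clique, an empty star of volume exactly $C^k$, and conversely by showing that any empty star of volume $\geq C^k/\mu$ forces a $k$-clique, which simultaneously establishes the ``further'' claim about the gap. The key structural fact I would rely on is the geometry already set up: in each plane $\R_i^2$, the blocking points $p_i(u)$, $0 \leq u \leq n$, force every maximal empty rectangle to have its upper-right corner at one of the $n$ designated points $c_i(u)$, and each such choice yields area exactly $C$ in that plane. Since the star must contain the origin and the $k$ planes are pairwise orthogonal, the total volume of an empty star factorizes as the product of the areas of its projections into the $\R_i^2$'s.

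First I would handle the forward direction. Given a clique $v_1, \dots, v_k$, I would take the box $B$ whose projection into each $\R_i^2$ is the maximal rectangle with upper-right corner $c_i(v_i)$, giving area $C$ per plane and total volume $C^k = V$. I then verify $B$ is empty: by construction it avoids all scaffold points $\scaff P$ (the $c_i(v_i)$ are exactly the admissible corners), and it avoids $P^{\textnormal E}$ because a killing point $\kill p_{ij}(uv)$ lies in $B$ only if its projections $\kill p_i(u)$ and $\kill p_j(v)$ lie in the respective projected rectangles; but $uv \in E$ since the $v_i$ form a clique, so $\kill p_{ij}(v_i v_j)$ was never added. This mirrors exactly the argument in Lemma~\ref{Lemma:RedBlueMain}.

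For the converse I would argue contrapositively and quantitatively: suppose the largest empty star has volume at least $C^k$ (more than $C^k/\mu$). Decompose its volume as $\prod_{i=1}^k A_i$, where $A_i$ is the area of its projection into $\R_i^2$. By the scaffold geometry each feasible $A_i \leq C$, so volume $\geq C^k$ forces $A_i = C$ in every plane, meaning the projection in each plane is a genuine ``large'' rectangle selecting some vertex $v_i$ via $c_i(v_i)$. Now if some pair $v_i, v_j$ were non-adjacent, the killing point $\kill p_{ij}(v_i v_j)$ would lie inside the star (using the projection property from the ``Encoding the edges'' paragraph), contradicting emptiness; hence $\{v_1, \dots, v_k\}$ is a clique. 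The ``further'' statement follows because if there is no $k$-clique, then in any empty star at least one projection cannot attain area $C$ — the offending plane is forced into the regime of Lemma~\ref{Lemma:RegionLemma}, capping that $A_i$ at $C/\mu$ while the others are at most $C$, yielding total volume at most $C^k/\mu$; and this bound is achieved by selecting a near-clique that violates only one edge.

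\textbf{The main obstacle} I anticipate is the tightness of the gap argument, i.e.\ rigorously justifying that a deficient plane loses \emph{exactly} a factor of $\mu$ rather than merely some factor, and that the losses do not compound. This hinges on Lemma~\ref{Lemma:RegionLemma}: a feasible rectangle whose upper-right corner misses every region $F_i(u)$ sits below the hyperbola $(x, C/(x\mu))$ and so has area at most $C/\mu$, while a rectangle forced off the $c_i(v_i)$ points lands precisely in this regime. I would need to confirm that the blocking points $\kill p_i(u)$ (with their slightly shifted coordinates $(C\mu^{u-2}, 1/\mu^u)$) are positioned so that killing \emph{one} coordinate pulls the admissible corner down by the factor $\mu$ and no more, and that an adversary cannot recover volume elsewhere — which is guaranteed since every other plane is already capped at $C$. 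The forward direction and the qualitative converse are routine adaptations of the discrete proof; the quantitative gap $C^k/\mu$ is where the careful coordinate bookkeeping in the construction pays off.
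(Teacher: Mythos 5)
Your proposal is correct and follows essentially the same route as the paper: the forward direction builds the product box with corners $c_i(v_i)$ and rules out killing points via the projection property, and the converse uses the factorization of the volume into per-plane projection areas together with Lemma~\ref{Lemma:RegionLemma} to cap at least one factor at $C/\mu$ when no clique exists. The only cosmetic difference is that you phrase the converse contrapositively (volume $\geq C^k$ forces a clique) where the paper argues directly from the absence of a clique, but the underlying argument is identical.
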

\begin{proof} Let $v_1, \dots v_k$ be a clique in $G$. In each $\R_i^2$, $1
  \leq i \leq k$, choose the rectangle with upper right corner $(C\cdot
  \mu^{v_i - 1}, \frac{1}{\mu^{v_i - 1}})$. Then the box is the product of these
  rectangles and it has volume $C^k$. By definition, it does not contain
  any point from one of the $\R_i^2$. If it would contain a point
  $p_{ij}(uv) \in \R_{ij}$, then the projection of $p_{ij}(uv)$ onto both
  $\R_i^2$ and $\R_j^2$ would be contained in the corresponding
  rectangles. But this means that $uv \notin E$, a contradiction.

  If there is no $k$-clique, any selection of $k$ large rectangles that do
  not contain a point in $\R_i^2$ would contain a point in one of the
  $\R_{ij}^4$, as among any $k$ vertices there are at least two that are not connected. Thus, in order to avoid all points, by Lemma
  \ref{Lemma:RegionLemma} in at least one $\R_i^2$ we cannot select a large
  rectangle intersecting any of the $F_i(u)$. Thus, the total volume can be
  at most $C^{k-1}\cdot \frac{C}{\mu} = C^k/\mu$.  \qed
\end{proof}
Thus, we have shown the first part of Theorem \ref{Thm:EmptyBoxMain}.

\subsection{An inapproximability result.}
Now it easily follows that the problem is even hard to approximate:
the $\mu$ chosen above can be made very large. By Lemma
\ref{Lemma:MaxEmptyStar}, the ratio between a large box for a point set
constructed from a positive instance and one constructed from a negative
instance is at least $\mu$. Since we can choose $\mu = 2^{\order(|R|)}$
(this takes only polynomially many bits), Theorem \ref{Thm:Inapproximability} follows.


\section{The Star Discrepancy Problem}\label{Sec:StarDiscrepancy}

In this section we show that computing the star discrepancy of a point set
inside the unit cube is $W[1]$-hard.

There are two reasons why the previous reduction does not give us the hardness-result for this problem right away: 
\begin{itemize}
	\item 	First, the maximum discrepancy can be
			attained by either a large box with few points inside or by \emph{a small box
			with many points inside}. For example, in our construction from Section \ref{Sec:MaximumEmptyStar}, large point 				sets lie in a box with (affine) dimension $d-1$ and thus a volume of 0. 

	\item 	Second, even if the maximum is attained for a large box, it might still contain \emph{some}
			points, in which case our construction would fail. 
\end{itemize}
However, we can get rid of
both problems by simply choosing the right value for $\mu$, and thus, $C$. Recall that these values determined the size of the largest empty box. $C$ is exactly the area of a maximum empty rectangle in each $\R_i^2$.

For a graph $G$, let $N$ be the total number of points in our construction
from the previous section. Recall that $N \in \order(k^2n^2)$. Any
box (containing the origin) that contains all points has volume of $1$, as there are points
with $x_i = 1$ and $y_i = 1$ for all $1 \leq i \leq k$. This leads to the following observation.
\begin{corollary}\label{Cor:MaxPointDis} For all boxes $B$ containing the origin we have 
\[ \frac{|B \cap P|}{|P|} - vol(B) \leq \frac{N-1}{N}. \]
That means that the fraction of points in any box can be bigger by at most $(N-1)/N$ than its volume.
\end{corollary}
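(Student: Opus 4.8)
The plan is to reduce everything to the trivial bound $|B\cap P|/|P|\le 1$ and sharpen it by exactly one point, via a case distinction on whether the star $B$ captures all of $P$. Throughout I use that $|P|=N$, together with the geometric fact recorded just above the corollary: since $P$ contains, in each plane $\R_i^2$, a point with first coordinate $1$ and a point with second coordinate $1$, any star that contains all of $P$ must be the full unit cube and hence have volume $1$.

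Concretely, a star $B\in\calB_0$ inside $[0,1]^{2k}$ is anchored at the origin, so its projection onto each $\R_i^2$ is a rectangle $[0,a_i]\times[0,b_i]$ with $a_i,b_i\le 1$. In the first case, suppose $B$ contains all $N$ points. The scaffold set supplies in each plane the points $p_i(n)=(1,1/\mu^n)$ and $p_i(0)=(C/\mu,1)$, which force $a_i=b_i=1$ for every $i$; hence $B=[0,1]^{2k}$ and $vol(B)=1$, so the left-hand side equals $\tfrac{N}{N}-1=0\le\tfrac{N-1}{N}$. In the complementary case $B$ misses at least one point, so $|B\cap P|\le N-1$ and $|B\cap P|/|P|\le\tfrac{N-1}{N}$; since $vol(B)\ge 0$, subtracting it only decreases the quantity, and the claimed bound follows immediately.

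There is no genuine obstacle here: the statement is an immediate consequence of the construction. The single point worth checking — which is precisely what the inequality encodes — is that the scaffold really does realise the extremal coordinate value $1$ on both axes of every plane, so that the only way to capture the entire point set is to pay volume $1$ for it. Were there no such extremal points, the fraction of captured points could exceed the volume by a full unit and the corollary would fail.
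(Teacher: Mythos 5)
Your proof is correct and follows essentially the same route as the paper: the corollary is presented there as an immediate consequence of the observation, stated just before it, that any origin-anchored box containing all of $P$ must have volume $1$ because the scaffold places points with $x_i=1$ and $y_i=1$ in every plane $\R_i^2$. Your two-case argument (all $N$ points captured forces $vol(B)=1$ and hence difference $0$; otherwise $|B\cap P|/|P|\le (N-1)/N$ and $vol(B)\ge 0$) is exactly the intended reasoning, merely spelled out in more detail than the paper bothers to.
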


Thus, our construction from Section \ref{Sec:MaximumEmptyStar} works if we can ensure that the largest empty box in a positive instance has volume more than $(N-1)/N$, that means 
\[ C^k = \left( \frac{1}{\mu^{n-1}} \right)^k > \left( \frac{N-1}{N} \right). \] 
Then the discrepancy is attained for an empty box: Enlarging the box can increase the volume by at most $1 - C^k$. But as $C^k > \frac{N-1}{N} = (1 - \frac{1}{N})$, we have that $(1 - C^k) < 1/N$. Thus, picking such an extra point cannot increase the discrepancy. This means that we can choose $\mu$ such that
\[ 1 < \mu < \left( \frac{N}{N-1} \right)^{\frac{1}{k(n-1)}}. \]

To make sure that $\mu$ requires only polynomially (in
$k$ and $n$) many bits, observe the following.
\begin{lemma}\label{Lemma:Mu} For $\mu = 1 + \frac{1}{t}$ with $t = 2knN$, it holds that $\mu^{k(n-1)} < \frac{N}{N-1}$.
\end{lemma}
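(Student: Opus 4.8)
The plan is to bound $\mu^{k(n-1)} = (1 + 1/t)^{k(n-1)}$ from above by a quantity that is manifestly smaller than $N/(N-1)$, using only elementary inequalities so that no appeal to the exponential function or to irrational numbers is needed; this keeps the whole estimate within the rationals, in line with the claim that $\mu$ requires only polynomially many bits. Writing $m = k(n-1)$ and $t = 2knN$, the first observation I would record is that the exponent-to-base ratio is small: since $n - 1 < n$ we have
\[ \frac{m}{t} = \frac{k(n-1)}{2knN} = \frac{n-1}{2nN} < \frac{1}{2N}, \]
and in particular $m/t < 1$, which is what makes the subsequent estimates legitimate.

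The technical heart is the elementary inequality $(1 + x)^m \le \frac{1}{1 - mx}$, valid for $0 \le mx < 1$. I would derive it in two steps. First, $1 + x \le \frac{1}{1-x}$ for $0 \le x < 1$, which is just $(1+x)(1-x) = 1 - x^2 \le 1$; raising to the $m$-th power gives $(1+x)^m \le (1-x)^{-m}$. Second, Bernoulli's inequality yields $(1-x)^m \ge 1 - mx$, hence $(1-x)^{-m} \le (1 - mx)^{-1}$ whenever $1 - mx > 0$. Applying this with $x = 1/t$ gives
\[ \mu^{k(n-1)} = \left(1 + \frac{1}{t}\right)^{m} \le \frac{1}{1 - m/t}. \]

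It then remains to chain the bounds together and compare with the target. Since $m/t < 1/(2N)$, I would conclude
\[ \frac{1}{1 - m/t} < \frac{1}{1 - 1/(2N)} = \frac{2N}{2N-1}, \]
and finally verify $\frac{2N}{2N-1} < \frac{N}{N-1}$, which after cross-multiplying (both denominators are positive because $N \ge 2$) reduces to $2N - 2 < 2N - 1$. Concatenating these inequalities proves $\mu^{k(n-1)} < N/(N-1)$.

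The only place where care is genuinely required is the interplay of constants rather than the individual inequalities, all of which are routine. The factor $2$ in the choice $t = 2knN$ is exactly what drives $m/t$ below $1/(2N)$ and simultaneously leaves the slack needed in the last step $\frac{2N}{2N-1} < \frac{N}{N-1}$; with a smaller constant the margin would close up. I therefore expect the main point to get right to be confirming that this particular $t$ is large enough for the whole chain to go through, while still being small enough that $\mu = 1 + 1/t$ has a polynomial-size encoding.
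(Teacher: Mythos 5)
Your proof is correct; every step checks out, including the strictness of the final chain (it enters via $m/t = (n-1)/(2nN) < 1/(2N)$ and via $\frac{2N}{2N-1} < \frac{N}{N-1}$). The overall skeleton matches the paper's: both arguments reduce the claim to showing $\bigl(1+\tfrac{1}{t}\bigr)^{\text{exponent}} \le \frac{1}{1-\text{exponent}/t}$ and then compare the resulting bound, which in both cases is exactly $\frac{2N}{2N-1}$, against $\frac{N}{N-1}$. Where you differ is in the justification of that central inequality and in the final comparison. The paper first relaxes the exponent from $k(n-1)$ to $kn$, expands $(1+1/t)^{kn}$ by the binomial theorem, bounds $\binom{kn}{i} \le (kn)^i$, and completes to the infinite geometric series $\sum_{i\ge 0}(kn/t)^i = \sum_{i\ge 0}(1/(2N))^i$; it then writes $\frac{N}{N-1}$ as $\sum_{i\ge 0}(1/N)^i$ and compares the two series termwise. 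You instead keep the exponent $m=k(n-1)$, derive $(1+x)^m \le (1-mx)^{-1}$ from $1+x\le (1-x)^{-1}$ together with Bernoulli's inequality, and finish by cross-multiplying fractions. Your route is slightly tighter (no relaxation of the exponent) and avoids manipulating infinite series, at the cost of invoking Bernoulli; the paper's route is self-contained modulo the binomial theorem and makes the role of the geometric series for $\frac{N}{N-1}$ explicit. Either argument establishes the lemma, and your closing remark correctly identifies the factor $2$ in $t=2knN$ as the source of the slack in both versions.
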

\begin{proof}
Observe that $\frac{N}{N-1} = \sum_{i = 0}^{\infty}  \left( \frac{1}{N} \right)^i$. 
Then 
\[\mu^{k(n-1)} < \mu^{kn} = \left(1 + \frac{1}{t}\right)^{kn} = \sum_{i = 0}^{kn}{kn \choose i} \frac{1}{t^i} \leq \sum_{i = 0}^{kn}(kn)^i \frac{1}{t^i} \leq \sum_{i = 0}^{\infty} \left( \frac{kn}{t} \right)^i. \]
Thus,
\[\mu^{k(n-1)} = \left(1 + \frac{1}{t}\right)^{k(n-1)} < \sum_{i=0}^{\infty}\left( \frac{1}{2N} \right)^i < \sum_{i = 0}^{\infty} \left( \frac{1}{N} \right)^i = \frac{N}{N-1}. \] \qed
\end{proof}

Constructing the set $P$ with this value of $\mu$, we immediately get:
\begin{lemma}\label{Lemma:StarDisrepancyMain} $G$ has a clique of size $k$, if and only if $\bar D_{\calB_0}(P) = C^k$.
\end{lemma}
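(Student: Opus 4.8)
The plan is to show that the star discrepancy of $P$ is always witnessed by an empty box, and that its exact value is governed by the volume of the largest empty star, which Lemma~\ref{Lemma:MaxEmptyStar} already ties to the existence of a $k$-clique. The key observation is that for an empty star $B$ (a box in $\calB_0$ with $B \cap P = \emptyset$) the discrepancy contribution is simply $\left| vol(B) - 0 \right| = vol(B)$, so an empty star of volume $C^k$ immediately yields $\bar D_{\calB_0}(P) \ge C^k$. By Lemma~\ref{Lemma:MaxEmptyStar}, such a star exists precisely when $G$ has a $k$-clique, so the whole task reduces to proving the matching upper bound $\bar D_{\calB_0}(P) \le C^k$, and to showing that this bound is strict in the no-clique case.

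To bound the discrepancy from above I would split the absolute value into its two directions. For the ``many points, small volume'' direction, Corollary~\ref{Cor:MaxPointDis} gives $\frac{|B \cap P|}{|P|} - vol(B) \le \frac{N-1}{N}$ for every $B \in \calB_0$, while the choice of $\mu$ in Lemma~\ref{Lemma:Mu} guarantees $C^k = \mu^{-k(n-1)} > \frac{N-1}{N}$, so this direction never reaches $C^k$. For the ``large volume, few points'' direction, I would distinguish two cases. If $B$ is empty, then $vol(B) - \frac{|B\cap P|}{|P|} = vol(B)$ is at most the volume of the largest empty star, which by Lemma~\ref{Lemma:MaxEmptyStar} is $C^k$ when there is a clique and $C^k/\mu$ otherwise. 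If $B$ contains at least one point, then $\frac{|B\cap P|}{|P|} \ge \frac{1}{N}$ and, since $B$ lies inside the unit cube, $vol(B) \le 1$; hence $vol(B) - \frac{|B\cap P|}{|P|} \le 1 - \frac{1}{N} = \frac{N-1}{N} < C^k$.

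Combining the two directions, every $B \in \calB_0$ satisfies $\left| vol(B) - \frac{|B\cap P|}{|P|} \right| \le C^k$, and equality can only arise from the empty-box case, i.e. from an empty star of volume exactly $C^k$. Thus $\bar D_{\calB_0}(P) = C^k$ if and only if such a star exists, which by Lemma~\ref{Lemma:MaxEmptyStar} holds if and only if $G$ has a $k$-clique; in the no-clique case the same estimates give $\bar D_{\calB_0}(P) \le \max\{C^k/\mu, \tfrac{N-1}{N}\} < C^k$, so the stated equivalence is genuinely an exact-value test.

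The step I expect to be the main (if modest) obstacle is verifying that enlarging a feasible empty box so that it swallows additional points can never increase the discrepancy. This is exactly where the seemingly technical inequality $C^k > (N-1)/N$ from Lemma~\ref{Lemma:Mu} does the real work: it converts the geometric gap $1 - C^k < 1/N$ into the statement that acquiring even a single point (worth $1/N$ in the discrete term, since $|P| = N$) costs at least as much as any volume that could possibly be gained. Ensuring that this inequality holds simultaneously with $\mu$ being encodable in polynomially many bits is the delicate quantitative balance underpinning the whole reduction, and it is already secured by Lemma~\ref{Lemma:Mu}.
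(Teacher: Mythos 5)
Your proof is correct and follows essentially the same route as the paper: rule out the ``many points, small volume'' direction via Corollary~\ref{Cor:MaxPointDis} and the choice of $\mu$ in Lemma~\ref{Lemma:Mu}, and reduce the ``large volume, few points'' direction to the largest empty star via $C^k > (N-1)/N$, then invoke Lemma~\ref{Lemma:MaxEmptyStar}. Your explicit case split on whether $B$ contains a point (giving the direct bound $vol(B) - |B\cap P|/|P| \le 1 - 1/N < C^k$) is a slightly cleaner rendering of the paper's ``enlarging the box cannot pay for the extra point'' argument, but it rests on exactly the same inequality.
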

\begin{proof} By the previous remarks, the maximum is attained for a \emph{large empty} box. Then, the proof follows from Lemma \ref{Lemma:MaxEmptyStar}. \qed
\end{proof}

This proves the first part of Theorem \ref{Thm:ContinuousDiscrepancyMain}.



\section{Largest Empty Box Problem}\label{Sec:LargestEmptyBox}
In the two upcoming sections, we will consider the analogous problems for the case when the origin does not have to be contained in the boxes. That means our range space is $\calB$, the set of all axis-parallel boxes inside the unit cube, as defined in Section \ref{Sec:Discrepancy}. We start with the case where we have to find a large empty box inside the unit cube, namely $d$-\textsc{Maximum-Empty-Box}.

This problem is quite different in the sense that, as the box does not have to contain the origin, now the $\R_i^2$ cannot be considered separately any more. This kills our construction from the previous section: the box $(0, 1)^{2k}$ does not contain any points from $P$ but has volume $1$. 

The plan is to reestablish this dependence, so that we can use the same reasoning as above. This can be done by a simple trick, which we call \emph{lifting}: From a graph $G$, we first construct the set $P$ as in Section \ref{Sec:MaximumEmptyStar} with the constant $C^k = 2/3$. Then, define the function $\lift\co \R^{2k} \to \R^{2k}$ as follows: 
\[ \lift (x_1, \dots, x_{2k}) = (x_1', \dots, x_{2k}')  \text { with } x_i' = \begin{cases}  x_i  & \text{ if } x_i \neq 0 \\  x_i' = 1/2 & \text{ otherwise}. \end{cases}\]
Now we apply the function lift to all points in the set $P$. For the lifted point $x$, we call the $\R_i^2$ that the point was lifted from the \emph{corresponding} $\R_i^2$. This gives the following:
\begin{lemma}\label{Lemma:LiftingLemma} Any box $B \in \calB$ having volume at least $2/3$ contains a point $x$ if and only if the projection onto the corresponding $\R_i^2$ contains the projection of $x$.
\end{lemma}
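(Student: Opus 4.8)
\noindent\emph{Proof proposal.} The plan is to exploit the large volume of $B$ to argue that every side of $B$ is so long that it automatically contains the coordinate value $1/2$; since lifting is precisely the operation that turns every off-plane coordinate into $1/2$, membership of a lifted point in $B$ will then be decided exactly by the coordinates of the corresponding $\R_i^2$, which $\lift$ leaves untouched.

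First I would fix notation and extract the side-length bound. Writing $B=\prod_{\ell=1}^{2k}(a_\ell,b_\ell)\subseteq[0,1]^{2k}$ with side lengths $w_\ell=b_\ell-a_\ell\le 1$, the hypothesis is $vol(B)=\prod_{\ell=1}^{2k}w_\ell\ge 2/3$. Since the product of the remaining $2k-1$ sides is at most $1$, each individual side must satisfy $w_\ell\ge 2/3$. An interval of length at least $2/3$ lying in $[0,1]$ then has $a_\ell\le 1/3$ and $b_\ell\ge 2/3$, so $a_\ell<1/2<b_\ell$, i.e.\ $1/2\in(a_\ell,b_\ell)$ for \emph{every} coordinate $\ell$. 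This is the only quantitative ingredient, and it is where the constant $C^k=2/3$ is used.

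Next I would plug in the definition of $\lift$. Let $x$ be a lifted point whose corresponding plane is $\R_i^2$, meaning its $\lift$-preimage has all coordinates outside $\R_i^2$ equal to $0$; by definition of $\lift$ every such coordinate of $x$ equals $1/2$, while the two coordinates in $\R_i^2$ are unchanged. By the previous paragraph each of these off-plane values $1/2$ already lies in the matching side $(a_\ell,b_\ell)$ of $B$, for free. Consequently $x\in B$ holds if and only if the two $\R_i^2$-coordinates of $x$ lie in the two corresponding sides of $B$, which is exactly the assertion that $\pr_{\R_i^2}(B)$ contains $\pr_{\R_i^2}(x)$. Here the forward implication is just the trivial fact that projection preserves containment, and the backward implication is what the volume bound buys us.

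I do not expect a genuine obstacle; the single point to get right is the open/closed convention, so that $1/2$ lies \emph{strictly} inside each side. The bound $w_\ell\ge 2/3$ yields $a_\ell\le 1/3$ and $b_\ell\ge 2/3$ with strict room on both sides, so the conclusion $1/2\in(a_\ell,b_\ell)$ is safe for open boxes. Finally I would remark that the identical reasoning covers the killing points $\kill p_{ij}(uv)$, whose support is a four-dimensional $\R_{ij}^4$: the $2k-4$ remaining coordinates are set to $1/2$ by $\lift$ and are captured by $B$ automatically, so membership is again decided solely by the projection onto the corresponding subspace. This reduction of high-dimensional containment to per-plane projections is what lets the construction of Section~\ref{Sec:MaximumEmptyStar} be reused here.
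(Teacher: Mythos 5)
Your proposal is correct and follows essentially the same route as the paper: the volume bound $2/3$ forces every side of $B$ to have length at least $2/3$ (since the remaining sides have product at most $1$), hence every side contains $1/2$, so the lifted off-plane coordinates are captured automatically and membership reduces to the projection onto the corresponding plane. The paper phrases this per two-dimensional projection (each projection has area at least $2/3$ and thus contains $(1/2,1/2)$) rather than per coordinate, but the argument is the same; your explicit remark about the killing points in $\R_{ij}^4$ is a welcome extra precision.
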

\begin{proof} As the box has volume at least $2/3$, each of its projections onto any of the $\R_j^2$ has an area of at least $2/3$, for otherwise the total volume would be less than $2/3 \cdot 1$. Thus, $(1/2, 1/2)$ is contained in the projection of $b$ to $\R_j^2$, for all $1 \leq j \leq k$. 

This means that every large box (of volume at least $2/3$) contains every point in all dimensions, except possibly for the corresponding $\R_i^2$. From this, the claim follows. \qed
\end{proof}

Further, any box of volume $2/3$ has its lower left endpoint inside $[0, 1/2)^{2k}$. As all points lie inside $[1/2, 1]^{2k}$, we can extend any large empty until its "lower left" corner is the origin. 

After these modifications, applying Lemma \ref{Lemma:LiftingLemma}, we can use the same arguments as in Section \ref{Sec:MaximumEmptyStar}: There is an empty box of volume $C^k$ if and only if $G$ has a $k$-clique. This proves the second part of Theorem \ref{Thm:EmptyBoxMain}.



\section{The Box Discrepancy Problem}\label{Sec:BoxDiscrepancy}
In order for our proof to work for this case, we will combine the ideas of the previous sections. Recall that we want to compute the box discrepancy
\[ \bar D_{\mathcal R}\left(P\right) := \max_{R \in \mathcal R} \left| vol(R) - \frac{|R \cap P|}{|P|} \right| \]
of a point set $P$.

To simplify our arguments, we make sure that any box containing \emph{all} points has volume $1$. Thereto, we add one point at the origin and one point at $(1, 1, \dots, 1)$. 

Now we construct the point set with the constants determined in Section \ref{Sec:StarDiscrepancy} (with $N$ increasing by $2$ because of the additional points). For this, we again choose $C$ large enough so that the maximum is attained for a large box with \emph{no points} inside, i.e., so that
\[ C^k > \left( \frac{N-1}{N} \right).\]
Finally, we lift all points (except for the origin) as in Section \ref{Sec:LargestEmptyBox}. Using the same arguments, Theorem \ref{Thm:ContinuousDiscrepancyMain} follows.


\section{Other geometric range spaces}
So far we have considered range spaces determined by (restricted or unrestricted) boxes inside the unit cube, namely $\calB$ and $\calB_0$. Similar questions can be asked when the ranges are determined by other (geometric) objects. We give a few examples and a short discussion on how to adapt the proofs to these ranges. We restrict ourselves to the analog of the $\textsc{Bichromatic-Rectangle}$ problem, from which the hardness of computing the combinatorial discrepancy is easily derived. That means for a point set $P = P_r \uplus P_b$ and range space $\calR$, we consider the value
\[ E_{\calR}(P_r, P_b) := \max_{R \in R, R \cap P_r = \emptyset} | R \cap P_b |. \]

\paragraph{Cubes}
If the range space is $\mathcal Q$, the set of all cubes inside the unit square, we can adapt the construction as shown in Figure \ref{subfig:CubeAdaption}.

\paragraph{Convex sets}
Here, the same arguments as in Section \ref{Sec:BichromaticRectangle} work as well: Any convex set that does not contain any red points can contain at most one blue point from each $\R_i^2$. Further, also as a direct consequence of convexity, the encoding of the edges works as well.

\paragraph{Half-spaces}
Instead of putting all points on a line, we now put all points on a convex curve as in Figure \ref{subfig:HalfSpaceAdaption}. Note the two additional red points on both ends to prevent $b_i(1)$ and $b_i(n)$ to be chosen at the same time (by a hyperplane that does not contain the origin).

\begin{figure}\label{fig:OtherRanges}
	\subfigure[Cubes]{\includegraphics[width=0.45\textwidth]{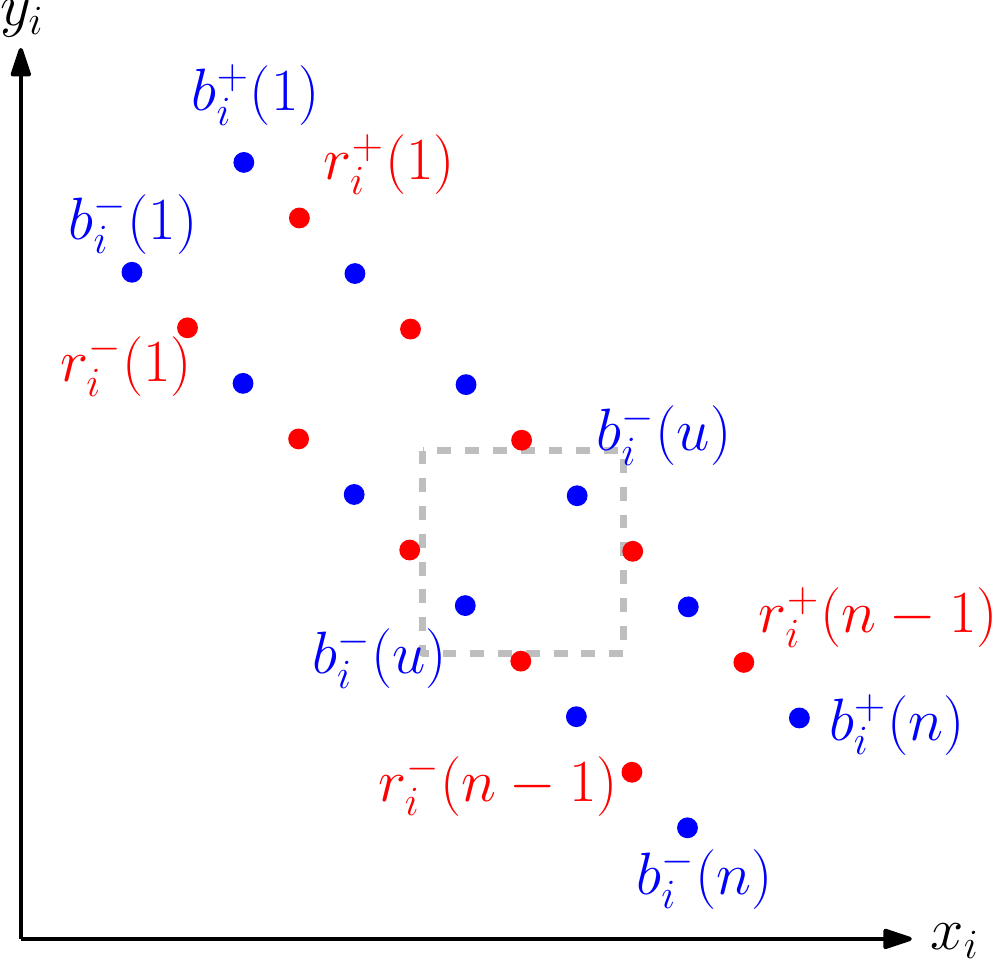}\label{subfig:CubeAdaption}}\hfill
	\subfigure[Half-spaces]{\includegraphics[width=0.45\textwidth]{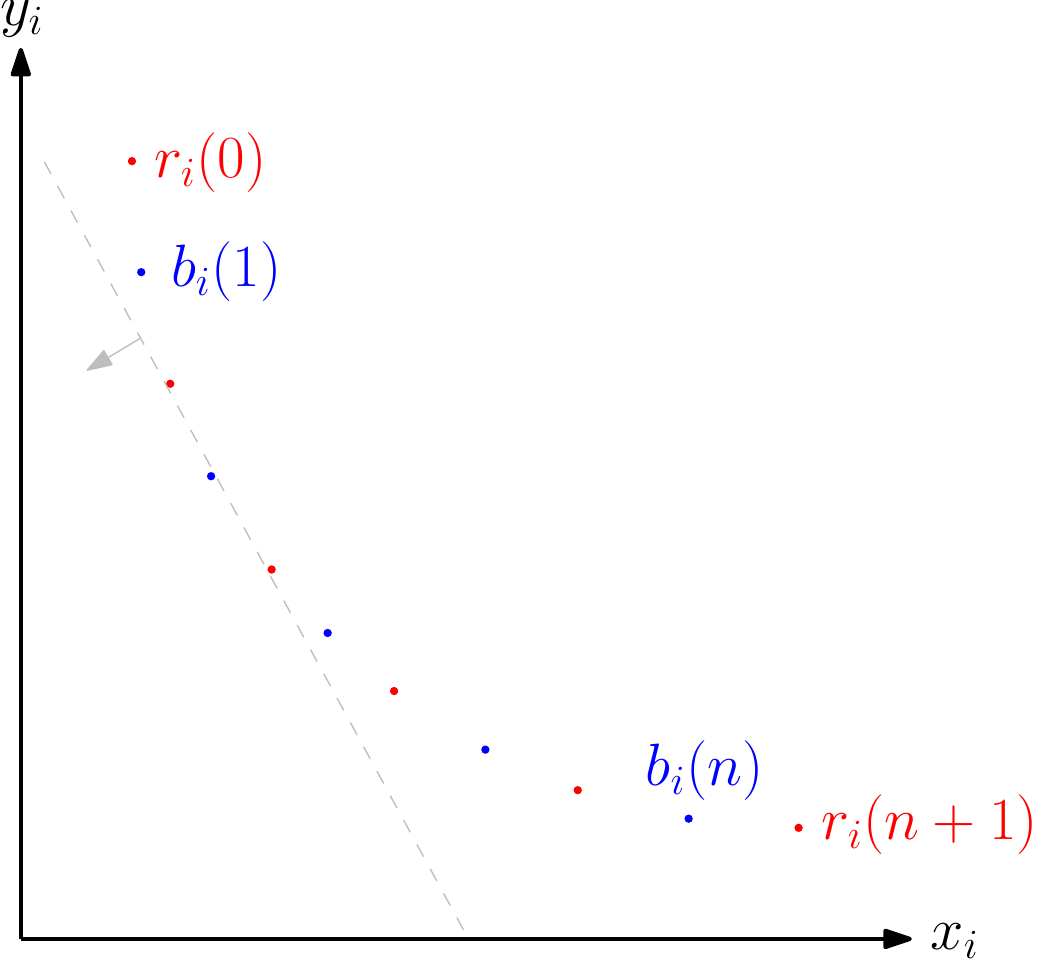}\label{subfig:HalfSpaceAdaption}}\hfill
	\caption{Other range spaces}
\end{figure}

\begin{theorem}\label{Thm:OtherRanges} The problems $d$-\textsc{Cube-Discrepancy}, $d$-\textsc{Convex-Sets-Discrepancy} and $d$-\textsc{Half-Space-Discrepancy} are W[1]-hard with respect to $d$.
\end{theorem}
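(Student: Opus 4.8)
\textbf{Proof proposal for Theorem~\ref{Thm:OtherRanges}.}

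The plan is to reduce from $k$-\textsc{Clique} once more, reusing the product-of-planes scaffold of Section~\ref{Sec:BichromaticRectangle} essentially verbatim and only modifying the gadgets inside each plane $\R_i^2$ (and the killing points across pairs of planes) to suit the geometry of each range space. The central invariant I must preserve is the one that drove the original proof: a feasible range (one avoiding all red points) may select \emph{at most one} blue vertex-point $b_i(\cdot)$ from each plane, and it can simultaneously select $b_i(u)$ and $b_j(v)$ from two distinct planes only when $uv \in E$. Once that invariant holds, Lemma~\ref{Lemma:RedBlueMain} transfers with no change, giving $E_{\calR}(P_r,P_b)=k+1$ iff $G$ has a $k$-clique, and the $W[1]$-hardness of the corresponding discrepancy problem follows exactly as in the adaptation to \textsc{Red-Blue-Discrepancy} (replace the origin by $N$ copies). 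Since the dimension stays $2k$, linear in the \textsc{Clique} parameter, the reduction is an fpt-reduction throughout.

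I would handle the three cases in increasing order of difficulty. For \textbf{convex sets} almost nothing needs changing: the scaffold red points already force at most one blue point per plane because any convex set containing two consecutive blue points must contain the red point placed strictly between them on the segment, and the killing point $\kill r_{ij}(uv)=b_i(u)+b_j(v)$ lies in the convex hull of $\{0,b_i(u),b_j(v)\}$, so convexity alone reproduces the box argument. For \textbf{cubes} I would argue that a cube is a special box, so the cube-feasibility constraint is only \emph{stronger}; the work is to re-scale or re-position the plane gadgets (as indicated in Figure~\ref{subfig:CubeAdaption}) so that a cube of the required side length can still capture exactly one chosen vertex-point per plane while the separating and killing points keep blocking the forbidden configurations. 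This is a matter of adjusting coordinates, not of changing the logic.

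The genuinely new case, and the one I expect to be the main obstacle, is \textbf{half-spaces}. A half-space is not bounded, so placing the vertex-points on a line no longer lets a single red point separate them: a half-space can slice off any suffix of a collinear point set at once, selecting several blue points simultaneously. The fix, as Figure~\ref{subfig:HalfSpaceAdaption} suggests, is to place the $n$ blue vertex-points of each plane on a strictly convex curve (e.g.\ the moment-type curve $u \mapsto (u,u^2)$), so that the points are in convex position and every blue point is a vertex of their convex hull; then each red separator point, placed just outside the chord between two neighbouring blue points, guarantees that any open half-space containing two distinct blue points also contains a red one, forcing the at-most-one property. I would additionally insert the two extra red points at the ends of the curve, as the figure notes, to stop a half-space from capturing $b_i(1)$ and $b_i(n)$ together without passing through the convex region where the separators live. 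The edge-encoding killing points $\kill r_{ij}(uv)$ must then be re-placed in the corresponding $\R_{ij}^4$ so that the half-space selecting $b_i(u)$ and $b_j(v)$ necessarily contains $\kill r_{ij}(uv)$ exactly when $uv\notin E$; verifying this requires checking that the chosen half-space, restricted to each plane, is still a half-plane supporting the convex-position gadget, so that the projection argument of Section~\ref{Sec:BichromaticRectangle} goes through. The delicate point throughout is that half-spaces couple the planes through a single global linear functional rather than acting independently on each $\R_i^2$, so I would need to confirm that the affine independence of the $k$ two-dimensional subspaces lets a supporting half-space be chosen whose restriction to each plane is prescribed independently; granting that, the clique correspondence and hence the $W[1]$-hardness follow as before.
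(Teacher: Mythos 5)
Your proposal follows essentially the same route as the paper, which itself only sketches this theorem: reuse the plane-product scaffold from Section~\ref{Sec:BichromaticRectangle}, note that convexity alone reproduces the box argument for convex sets, rescale the gadgets for cubes, and for half-spaces move the per-plane points onto a convex curve with two extra red points at the ends to block selecting $b_i(1)$ and $b_i(n)$ together. If anything you are more explicit than the paper, which does not address the coupling issue you correctly flag (a single linear functional with one shared threshold acts on all $k$ planes and on the killing points in $\R_{ij}^4$ simultaneously, so the ``contained iff both projections are contained'' property needs separate verification for half-spaces); the paper leaves that detail to the figure, so your treatment is at least as complete as the published argument.
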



\section{Implications on verification of epsilon-nets}
We concentrate on the problem $d$-\textsc{Net-Verification} for half-spaces. The proof can easily be adapted to the other range spaces considered.

Recall that, for given sets $S \subseteq P \subset \R^d$ and an $\eps > 0$, we want to decide whether every half-space that contains at least $\eps |P|$ points also contains a point from $S$.

To see why this problem is hard, consider the construction for the $d$-\textsc{Bichromatic-Rectangle} problem from Section \ref{Sec:BichromaticRectangle}. There, we had a set $P_r$ of red and a set $P_b$ of blue points. We have shown that it is W[1]-hard to decide whether there is a half-space containing $k$ blue and no red points. To reduce this problem to $d$-\textsc{Net-Verification}, we set $P$ to be the set of all points, and $S$ to be the set of red points $P_r$. Then, we set $\eps = k/|P|$. Now a  half-space containing $k$ points corresponds to a large set that is \emph{not} intersected: it contains $k = \eps|P|$ blue points but no red points. But this means that $S$ is \emph{not} an $\eps$-net. This proves Theorem \ref{Thm:EpsilonNetMain}.


\section{Conclusion}

As the problems we have considered are all computationally hard when $d$ is
part of the input, we have to resort to approximation algorithms when
dealing with them. We have seen that for the \textsc{Maximum-Empty-Star}
problem, there is no hope for a polynomial time algorithm with a reasonable
approximation factor at all.
The (in)approximability of the (star) discrepancy is open even in the
classical complexity theory framework. The question whether it can be
approximated in FPT time when parameterized by dimension seems worthwhile
considering.

\bibliographystyle{alpha}
\bibliography{Discrepancy}

\end{document}